\newcommand{\beq}{\begin{eqnarray*}}
\newcommand{\eeq}{\end{eqnarray*}}
\newcommand{\Diff}{\operatorname{Dif{}f}}
\newcommand{\Emb}{\operatorname{Emb}}
\newcommand{\rem}[1]{}
\newcommand{\mg}{{\mathfrak{g}}}
\newcommand{\mX}{{\mathfrak{X}}}
\newcommand{\cA}{{\mathcal{A}}}
\newcommand{\cB}{{\mathcal{B}}}
\newcommand{\cG}{{\mathcal{G}}}
\newcommand{\cI}{{\mathcal{I}}}
\newcommand{\cM}{{\mathcal{M}}}
\newcommand{\cN}{{\mathcal{N}}}
\newcommand{\cS}{{\mathcal{S}}}
\newcommand{\cU}{{\mathcal{U}}}
\newcommand{\mI}{{\mathbb{I}}}
\newcommand{\mR}{{\mathbb{R}}}
\newcommand{\al}{\alpha}
\newcommand{\de}{\delta}
\newcommand{\sig}{\sigma}
\newcommand{\om}{\omega}
\newcommand{\Om}{\Omega}
\newcommand{\dd}[2]{\frac{d #1}{d #2}}
\newcommand{\prt}{\partial}
\newcommand{\dede}[2]{\frac{\de #1}{\de #2}}
\newcommand{\eval}[2]{{\left. #1 \right|_{#2}}}
\newcommand{\Ad}{{\mbox{Ad}}}
\newcommand{\ad}{{\mbox{ad}}}
\newcommand{\id}{{\mbox{id}}}
\newcommand{\hor}{{\mbox{\rm Hor}}}
\newcommand{\lsb}{\left[}
\newcommand{\rsb}{\right]}
\newcommand{\llsb}{\left[ \! \left[}
\newcommand{\rrsb}{\right] \! \right]}
\newcommand{\bllsb}{\left[ \!\! \left[}
\newcommand{\brrsb}{\right] \!\! \right]}
\newcommand{\lp}{\left(}
\newcommand{\rp}{\right)}
\newcommand{\scp}[2]{{\left\langle {#1}\, , \, {#2}\right\rangle}}
\newtheorem{thm}{Theorem}[section]
\newtheorem{prop}[thm]{Proposition}
\newtheorem{lem}[thm]{Lemma}
\newtheorem{rema}[thm]{Remark}
\newtheorem{problem}[thm]{Problem}
\newtheorem{example}[thm]{Example}
\title{Un-reduction}
\author[M. Bruveris, D.C.P.\ Ellis, F. Gay-Balmaz and D.D.\ Holm]{}
\subjclass{37K05, 37K65}
\keywords{Lagrangian reduction, force fields, un-reduction, image matching}
\email{m.bruveris08@imperial.ac.uk}
\email{david.ellis1@imperial.ac.uk}
\email{gaybalma@lmd.ens.fr}
\email{d.holm@imperial.ac.uk}
\begin{document}

\maketitle

\centerline{\scshape Martins Bruveris, David C.P. Ellis and Darryl D. Holm }
\medskip
{\footnotesize
 	\centerline{Department of Mathematics}
	\centerline{Imperial College London}
	\centerline{London, SW7 2AZ UK}
}

\medskip

\centerline{\scshape Fran\c{c}ois Gay-Balmaz}
\medskip
{\footnotesize
 \centerline{Laboratoire de M\'et\'eorologie Dynamique}
 \centerline{Ecole Normale Sup\'erieure de Paris}
 \centerline{75005 Paris, France.}
}

\bigskip

\begin{abstract}
\noindent This paper provides a full geometric development of a new technique called {\em un-reduction}, for dealing with dynamics and optimal control problems posed on spaces that are unwieldy for numerical implementation.  The technique, which was originally concieved for an application to image dynamics, uses Lagrangian reduction by symmetry in reverse.  A deeper understanding of un-reduction leads to new developments in image matching which serve to illustrate the mathematical power of the technique.
\end{abstract}


\section{Introduction}\label{sec:Intro}

Recently there has been considerable interest in geodesic shape matching, as discussed in \cite{BaHaMi2010,CoHo2009,MiMu07} and references therein.  This interest has grown from the desire to assist applications in medical imaging by providing a coherent, quantitative method for comparing shapes. See \cite{Yo2010,YoArMi2009} for extensive discussions of the background and mathematical basis of this endeavor, which is known as \emph{computational anatomy}. 

For our purposes, a \emph{shape} is taken to be an embedded or immersed submanifold of an ambient space. Given two smooth manifolds $\cM$ and $\cN$, the shape space of $\cM$-type submanifolds of $\cN$ may be identified with the quotient space $\Sigma:=\Emb\lp \cM,\cN\rp/\Diff\lp \cM\rp$.  That is, the space of embeddings of $\cM$ into $\cN$ up to re-parametrization of $\cM$.
An example of particular interest is the shape space of all closed, simple, planar unparametrized curves, given by $ \Sigma = \Emb\lp S ^1 ,\mathbb{R}  ^2 \rp/\Diff\lp S ^1\rp$

In order to compare two shapes, a natural way to proceed is to construct a path between them.  This is known as the {\em matching problem}.  For example, the matching problem for closed, simple planar curves may be stated as follows:  Let $\rho_0$ and $\rho_1$ be two smooth submanifolds in $\mR^2$ of $S^1$-type.  Find a path of submanifolds of $S^1$-type, $\rho(t)$, such that $\rho(0) = \rho_0$ and $\rho(1) = \rho_1$.  The preferred path is selected in such a way that it minimizes a certain functional, as illustrated in the below.
\vspace{1em}

\begin{problem}[The matching problem]\label{IntroProblem} {\rm Minimize the functional
\[
\int_0^1 \ell^\Sigma\lp \rho, \dot \rho \rp\, \mbox{d}t
\]
subject to the boundary conditions $\rho(0) = \rho_0$ and $\rho(1) = \rho_1$, where $ \Sigma $ denotes the space of all submanifolds of $S^1$-type in the plane and $\ell^ \Sigma :T \Sigma \rightarrow \mathbb{R}  $ is a Lagrangian defined on the tangent bundle of $ \Sigma $.}
\end{problem}
\vspace{1em}

As is well known, the solution to the matching problem satisfies the Euler-Lagrange equations
\begin{equation}
\frac{d}{dt}  \dede{\ell^\Sigma}{\dot \rho} - \dede{\ell^\Sigma}{\rho} =0.
\label{eq:EL}
\end{equation}

In order for these equations to make sense rigorously, $\Sigma$ must be endowed with a smooth manifold structure.  This is achieved in \cite{KrMi1997,Mi1980}.  The key concern with the matching problem, however, is that numerical implementation of \eqref{eq:EL} is rendered unfeasible by a lack of coordinates for the shape space $ \Sigma $. 
The task for this paper is to investigate alternative formulations of the matching problem in such a way that their solutions are tractable to practical implementation.

\medskip

\paragraph{\bf Introducing un-reduction} Recent progress has been made for the matching problem with simple, planar, closed curves in the geodesic case, with $\ell^\Sigma(\rho,\dot \rho) = \cG^\Sigma_\rho\lp \dot \rho,\dot \rho\rp$ where $\cG^\Sigma$ is a Riemannian metric on the $\Sigma=\Emb\lp S^1, \mR^2\rp/\Diff\lp S^1\rp$, called shape space. This approach involved formulating a related geodesic problem on $\Emb\lp S^1, \mR^2\rp$ rather than $\Sigma$, \cite{CoHo2009,MiMu07}.  These papers study the geodesic problem by adapting techniques commonly used for reduction by symmetry of variational principles, as in \cite{Mo1993}, in a novel way.  In general, given a principal $G$-bundle $\pi:Q \to \Sigma$ and a $G$-invariant Lagrangian $L:TQ \to \mR$, one may write down a reduced variational principle on $TQ/G$, as in \cite{CeMaRa2001}.  This procedure is called \emph{Lagrangian reduction by symmetry}.  The novel insight used in \cite{CoHo2009,MiMu07} is to use this procedure in reverse.  That is, one is interested in geodesics on the base space $ \Sigma $, but instead of applying a variational principle directly, one formulates a geodesic problem on $Q$ that reduces to the desired geodesic problem on $\Sigma$.  That is, a form of \emph{un-reduction} is employed, and the resulting equations on $Q$ are more convenient to deal with numerically. This approach has also been applied recently in \cite{BaHaMi2010} to achieve promising results for higher dimensional geodesic problems.

The un-reduction procedure is distinct from the inverse process to Lagrangian reduction by symmetry, called {\em reconstruction}.  Lagrangian reduction by symmetry relates the Euler-Lagrange (EL) equations on $TQ$ with a new set of equations on $TQ/G$, known as the Lagrange-Poincar\'e (LP) equations.  Correspondingly, the reconstruction procedure relates solutions of the LP equations on $TQ/G$ with solutions of a set of EL equations on $TQ$.  The goal of un-reduction, however, is to find a parametric family of equations, called {\em un-reduction equations}, on $TQ$ whose solution projects onto those of a set of EL equations on $T\lp Q/G\rp$.  Therefore, un-reduction and reconstruction only coincide when the component of the LP equations on $T\lp Q/G\rp$ coincides with the desired set of EL equations.  In fact, even when this coincidence occurs, reconstruction turns out to be a sub-manifold of an infinite dimensional space of un-reduction equations.  Nevertheless, un-reduction equations that coincide with reconstruction do arise in the case of geodesic equations with horizontality conditions enforced.  This fact has enabled the development of horizontal shooting methods for image dynamics in \cite{CoHo2009,MiMu07}.  The term `un-reduction', which was coined in \cite{CoHo2009}, reflects the fact that the technique lifts EL equations from $T\lp Q/G\rp$ to $TQ$, this mirrors the way reduction projects EL equations from $TQ$ to $TQ/G$.

\begin{SCfigure}%
\centering
\caption{A trajectory in $Q$ from $q_0$ to $q_1$ that projects via $\pi$ to a trajectory in $\Sigma$ from $\rho_0$ to $\rho_1$.  Un-reduction poses a boundary value problem in $Q$ such that the projected trajectories are solutions to the Euler-Lagrange equations in $\Sigma$.}%
\includegraphics[width=0.43\columnwidth]{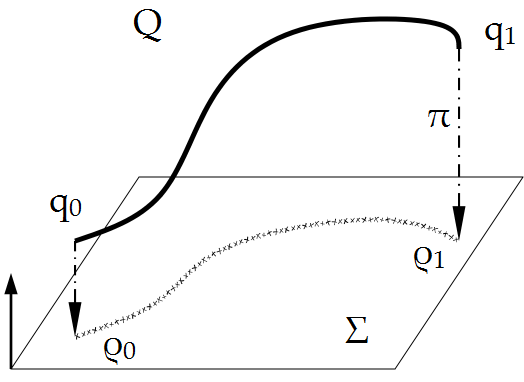}%
\label{fig:principalBundle}%
\end{SCfigure}
Some numerical difficulties persist, in that, although the geodesic in shape space may be successfully represented, its parametrization may still evolve in an undesirable way.  For example, when the geodesic equations are implemented on $\Emb\lp S^1,\mR^2\rp$, phenomena such as clustering of data points and bad parametrizations present themselves, as shown in Figure $1$ of \cite{CoHo2009}.  One approach developed in \cite{CoHo2009} and \cite{ClCoPe2010} deals with these issues by adding a step in the numerical procedure that reparametrizes the initial conditions to evolve onto a prescribed parametrization of the target shape.  This initial reparametrization procedure commutes with evolution along the matching trajectory, and may also iterated a number of times throughout the evolution without significantly changing the problem.  This iterative approach effectively breaks the matching problem up into a number of sub-problems.  While this method works well in practice, it is {\em extrinsic} in nature, and sidesteps the problem with the matching equations rather than dealing with it directly.  It would be preferable to seek matching equations for whom the parametrization problem never arose.  That is, for the parametrization issues to be dealt with {\em intrinsically}.  Indeed, an intrinsic method is also described in (loc. cit.), which involves relaxing constraints on the initial conditions.  The extra initial degrees of freedom so gained allow one to match any boundary data irrespective of parameterisation, however nothing is said about the parametrization dynamics along the trajectory.  One of our aims in developing un-reduction is to provide the necessary tools to derive matching equations that intrinsically control parametrization dynamics along the entire trajectory.  Such a derivation is given in \S\ref{PlaneCurves}, which represents, as far as we are aware, the first geodesic matching equations with dynamic, intrinsic parametrization control.

The papers reporting progress so far, such as \cite{BaHaMi2010,CoHo2009,MiMu07}, have constrained themselves to geodesic problems.  More precisely, geodesics on $ \Sigma $ are obtained by projection of horizontal geodesics on $Q$, relative to a $G$-invariant Riemanian metric on $Q$ that projects to the given metric on $ \Sigma $. Whilst the geodesic problem is important, it may be useful to match curves in a variety of other ways depending on the application.  We shall find that in order to achieve the desired goals in parametrization dynamics, the geodesic properties must be partially sacrificed.  Further, in certain situations it may be useful to assert that the curve dynamics respect different properties such as that of being described as a graph in the ambient space.  For future work we are thinking, for example, of the geometric splines approach for image analysis as recently discussed in \cite{GBHoMeRaVi2010,Vi2009}, which also fits naturally into the un-reduction framework that we develop here. However, applications to image analysis using geometric splines is beyond our present scope and will be discussed elsewhere.  The framework used in developing geodesic matching needs to be extended in order to incorporate non-geodesic properties.  Our motivation in this paper is to provide one possible extension by developing extra geometric tools, and to demonstrate their use on an established problem from curve matching.

\medskip

\paragraph{\bf A simplified problem.} To illustrate the idea of the paper in a simple case, where the technicalities of adjoint bundles and connections are not needed, let us consider the following problem. We start with a manifold $\mathcal{Q} $ on which a Lie group $\mathcal{G} $ acts and we denote the quotient $\mathcal{M} =\mathcal{Q} /\mathcal{G} $ and the projection $\pi:\mathcal{Q} \to \mathcal{M} $. The problem we consider is to find the critical points of a smooth function $f$ on $\mathcal{M} $. The idea of un-reduction is to consider instead a function $\hat f$ on $\mathcal{Q} $, which is connected to $f$ via $\hat f = f \circ\pi + g$ with $g$ some function on $\mathcal{Q} $. By choosing a connection, we can split the tangent bundle as $T\mathcal{Q}  = \ker T\pi \oplus H$, with $H$ the horizontal bundle. Critical points of $\hat f$ satisfy the equations
\begin{align*}
d(f\circ\pi)|_H &= -dg|_H & dg|_{\ker T\pi} = 0.
\end{align*}
Since we were originally interested in the problem $d f = 0$, we introduce a force $F$ and solve the problem $d\hat f = F$ on the total space, where $F$ is chosen in such a way that it cancels the $dg$ term. This in a nutshell is the idea of un-reduction. We will apply these ideas to the case where $ \mathcal{Q} $ is a set of curves in a manifold $Q$ on which a Lie group $G$ acts freely and properly, and $f$ is the action associated to some Lagrangian $\ell:T \Sigma \rightarrow \mathbb{R} $, where $\Sigma= Q/G$, so that $df=0$ corresponds to the Euler-Lagrange equations for $\ell$.

\medskip

\paragraph{\bf The purpose of this paper} 
The objective of this paper is to investigate the extent to which the un-reduction approach may be used in the design of numerical methods, by investigating its geometrical context.  Our investigation, inspired by the novel approach taken in \cite{CoHo2009} that introduced reduction-by-symmetry techniques in reverse for the particular case of geodesic simple, closed, planar curve matching, reveals a rich geometric framework for un-reduction, which turns out to be broader in scope than the pioneering curve matching examples.  The essential features of an un-reduction procedure are highlighted in purely geometric terms, giving clarity and rigorously capturing the generic notion of `doing reduction-by symmetry in reverse'.  When cast in the language of geometry, the un-reduction technique is sufficiently general that one may expect it to find many other productive uses, even in different fields such as data assimilation.  See also \cite{CoHo2009} for additional outlook toward further potential applications of un-reduction.

The general theory re-applied to the curve matching problem in \S\ref{PlaneCurves} goes some way towards answering the call for greater control over parametrization dynamics in \cite{CoHo2009}.  For example, our investigation of the geometry shows that the problem of quality control in parametrizations, and the extension to include potential forces may be addressed \emph{simultaneously}.  Further, the general un-reduction algorithms admit the introduction of a family of exogenous design factors that may be used either to formulate optimal control problems based on parametrization dynamics, or for modeling purposes.  This extra modeling capability may then lead to greater functionality of the solution, such as adaptive matching algorithms in which data points may evolve to preserve properties such as uniform parametrization over time.

After introducing the required geometrical background in \S\ref{sec:Geometry} and explaining the basic ideas in the method of reduction by symmetry in \S\ref{sec:Reduction}, we look at the approaches for applying the symmetry-reduction method \emph{in reverse} as un-reduction in \S\ref{sec:Dist}.  Having achieved a good understanding of the fundamental issues, a general procedure for applying the un-reduction method is then developed in \S\ref{sec:strategies}.   Parallels are drawn both with particular cases of un-reduction and with reduction by symmetry.  Finally, in \S\ref{PlaneCurves} a new un-reduction procedure is applied to the curve matching problem, and comparisons are made with previous treatments.

\rem{
Let's have a section called ``Main themes and results of the paper", OK? \\[2mm]
In this new themes-and-results section to be inserted here, let's lay out the main themes of geometrical thinking that we apply in this paper, e.g., invariants of curves, momentum maps, the importance of keeping track of horizontal and vertical motions, since un-reduction moves in the vertical direction -- and the key theorems to which they lead. For example, I thought the interpretation of their zero-level sets as horizontality conditions in Section \ref{sec:geo} was particularly pleasing. \\[2mm]
We write about the geometrical approach to numerical algorithms because we believe it prepares the way for computing to provide insight, not just crunch numbers.
}

\section{Review of geometric constructions}\label{sec:Geometry}

This section reviews the necessary geometric tools for the formulation of un-reduction.  For a more in depth overview and application of these tools to Lagrangian reduction, see \cite{CeMaRa2001}.

Consider a free and proper right action
\[
 Q \times G \rightarrow Q,\quad (q,g)\mapsto q \cdot g
\]
of a Lie group $G$ on a manifold $Q$, and denote by $\pi: Q\rightarrow \Sigma := Q/G$ the associated principal $G$-bundle.
Let $\mg $ be the Lie algebra of the Lie group $G$.  A {\em connection form} $\cA$ on $\pi$ is a $\mg$-valued one-form that satisfies the following properties:
\begin{enumerate}
        \item $\cA( q \cdot \xi ) = \xi,\;$ for all $\xi \in \mathfrak{g} $ and $ q\in Q$\label{prop:con1},
        \item $\cA\lp v_q\cdot g\rp = \Ad_{g^{-1}}\cA(v_q),\;$ for all  $v_q \in T_qQ$ and $g\in G$\label{prop:con2},
\end{enumerate}
where
\begin{align} \label{infinitesimal_generator}
q \cdot \xi:=\xi_Q(q):= \left.\frac{d}{dt}\right|_{t=0}q \cdot \exp(t\xi)
\end{align} 
is the infinitesimal generator associated to the Lie algebra element $ \xi \in \mathfrak{g}$ and $v_q \cdot g$ denotes the tangent lifted action of $G$ on the tangent bundle $TQ$.

A connection form splits the tangent bundle into vertical and horizontal distributions.  That is, $TQ = VQ\oplus HQ$ where $VQ$ and $HQ$ are the distributions in $TQ$ defined fiberwise by
\[
V_qQ = \ker T_q\pi, \qquad H_qQ = \ker \cA_q.
\]

Since $H_qQ$ is complementary to $\ker T_q\pi$, it constitutes an Ehresmann connection on $Q$, and property \ref{prop:con2} ensures the connection $HQ$ is $G$-equivariant in the sense that $H_{q\cdot g}Q =  H_qQ \cdot g$.  Such a connection is called a {\em principal connection}.  Connection forms are in one-one relationship with principal connections. We will denote by
\[
P^v:TQ\rightarrow VQ \quad\text{and}\quad P^h:TQ\rightarrow HQ
\]
the projections associated to the decomposition $TQ = VQ\oplus HQ$.

The {\em horizontal lift operator}, $\hor_q:T_{\rho}\Sigma \to H_qQ$, $\rho = \pi\lp q\rp$, is by definition the inverse of the isomorphism $\eval{T_q\pi}{H_qQ}: H_qQ \to T_\rho \Sigma$. 
The horizontal lift operator is $G$-equivariant in the sense that $\hor_{q\cdot g}\lp v_\rho\rp = \lp\hor_{q}\lp v_\rho\rp\rp\cdot g$.
Integrability of $HQ$ is measured by the curvature form of $\cA$, which is defined by
\[
\cB\lp v_q, u_q\rp = \mathbf{d} \cA\lp v_q, u_q\rp + \lsb \cA\lp v_q\rp, \cA\lp u_q\rp\rsb.
\]
The principal connection $HQ$ is integrable if and only if $\cB = 0$. For more discussion of these points, see \cite{CeMaRa2001} and references therein.  Note that $\cB$ is a \emph{horizontal} form, that is $\cB\lp u_q, v_q\rp = 0$ if either $u_q \in V_qQ$ or $v_q \in V_qQ$.

\subsection{Associated bundles}

An {\em associated vector bundle} is a quotient manifold of the form $Q\times_G V := \lp Q\times V\rp/G$ where $V$ is a vector space upon which $G$ acts linearly, and $G$ acts on $Q\times V$ by diagonal action.  The equivalence class of $\lp q, v\rp \in Q\times V$ is denoted $\llsb q, v\rrsb_V$.  A vector bundle structure $\tau_V:Q\times_G V \to \Sigma$ is given by
\[
\tau_V\lp \llsb q, v\rrsb_V\rp = \pi\lp q\rp.
\]
The vector space operations are defined fibrewise by
\[
\llsb q, v\rrsb_{V} + \lambda\llsb q, u\rrsb_V = \llsb q, v + \lambda u\rrsb_V,\quad\text{for all $ \lambda \in \mathbb{R}  $}.
\]
Note that $q$ must be the same for each element representative in this relation.

A connection form $\cA$ on $Q$ introduces a covariant derivative of curves in $Q\times_G V$ which reads
\begin{equation} \label{general_formula_covder}
D_t \llsb q\lp t\rp, v \lp t\rp \rrsb_V = \llsb q(t) , \dot v(t) - v(t)\cdot \cA\lp q(t), \dot q(t) \rp \rrsb_V
\end{equation} 
where $\cdot$ denotes the infinitesimal action of $\mathfrak{g} $ on $V$.

The {\em adjoint bundle} is the associated vector bundle with $V=\mg$ under the adjoint action by the inverse, $\xi \mapsto \Ad_{g^{-1}}\xi$, and is denoted $\Ad \, Q := Q \times_G\mg$.  The adjoint bundle is special, since it is a Lie algebra bundle. That is, each fibre is a Lie algebra with the Lie bracket defined by
\[
\lsb \llsb q, \xi \rrsb_\mg, \llsb q, \eta\rrsb_\mg\rsb = \llsb q, \lsb \xi, \eta\rsb \rrsb_\mg.
\]

Sections of the adjoint bundle correspond to $G$-invariant vertical vector fields on $Q$:  Consider a section $\bar \xi\lp \rho\rp = \llsb q, \xi(q) \rrsb_\mg$ where $\pi\lp q\rp = \rho$ and $\xi:Q \to \mg$ is a smooth, $G$-equivariant map.  The section $\bar\xi$ generates a vertical, $G$-invariant vector field $\bar\xi_Q \in \mX\lp Q\rp$ according to
\[
\bar\xi_Q\lp q\rp :=   q \cdot \lp\xi(q)\rp=\xi(q)_Q(q),
\]
where $\cdot$ denotes the infinitesimal action of $ \mathfrak{g}  $ on $Q$, see \eqref{infinitesimal_generator}.
The properties of verticality and $G$-invariance are easily verified.

The {\em coadjoint bundle} is the associated vector bundle with $V= \mg^*$ under the coadjoint action, $\mu \mapsto \Ad^*_{g}\mu$, and is denoted $\operatorname{Ad}^*Q = Q \times_G \mg^*$.  There is a natural pairing between the adjoint and coadjoint bundles given by
\[
\scp{\llsb q, \mu\rrsb_{\mg^*}}{\llsb q, \xi \rrsb_{\mg}} = \scp{\mu}{\xi}_{\mg^*\times \mg}.
\]

By using the general formula \eqref{general_formula_covder}, one obtains the following expression for 
the covariant derivatives of curves in $\operatorname{Ad}Q$ and $\operatorname{Ad}Q^*$:
\begin{align*}
D_t\llsb q\lp t\rp, \xi\lp t\rp \rrsb_{\mg} &= \bllsb q\lp t\rp, \dot \xi\lp t\rp + \lsb \cA\lp q\lp t\rp, \dot q\lp t\rp \rp, \xi\lp t\rp \rsb \brrsb_{\mg}\\
D_t\llsb q\lp t\rp, \mu\lp t\rp \rrsb_{\mg^*} &= \bllsb q\lp t\rp, \dot \mu\lp t\rp - \ad^*_{\cA\lp q\lp t\rp, \dot q\lp t\rp \rp} \mu\lp t\rp \brrsb_{\mg^*}.
\end{align*}
Therefore, for any curve $\bar \mu\lp t\rp$ in $\operatorname{Ad}^*Q$ and $\bar \xi \lp t\rp$ in $\operatorname{Ad}Q$ covering the same curve $ \rho (t)\in \Sigma$, we have
\[
\dd{}{t}\scp{\bar\mu(t)}{\bar \xi(t)} = \scp{D_t \bar \mu\lp t\rp}{\bar \xi(t)} + \scp{ \bar \mu\lp t\rp}{D_t \bar \xi(t)}.
\]
The connection form $\cA$ on $Q$ induces a map $\bar \cA: TQ \to \operatorname{Ad}Q$ defined by
\[
\bar \cA\lp v_q\rp = \llsb q, \cA\lp v_q\rp\rrsb_{\mg}.
\]
Note, in particular, that
\[
\bar\cA \circ \bar\xi_Q = \bar\xi
\]
for all sections $\bar \xi \in \Gamma \lp \operatorname{Ad}Q\rp$.

Since the curvature form $ \mathcal{B} $ is horizontal, it induces an $ \operatorname{Ad}Q$-valued two-form $\bar {\mathcal{B}} $ on $ \Sigma $ defined by
\[
\bar{ \mathcal{B} }_\rho(u_ \rho , v_ \rho )=\bllsb q, \mathcal{B} _q( u _q , v _q ) \brrsb_{\mg},\quad u_ \rho , v_ \rho \in T_ \rho \Sigma ,
\]
where $u _q , v_q $ are arbitrary vectors in $T_qQ$ such that $T_q(u_q)=u_ \rho $ and $T_q \pi ( v _q )= v _\rho $. The two-form $\bar{ \mathcal{B} }$ is called the \emph{reduced curvature form}.

Finally, recall that the {\em cotangent lift momentum map}, $ \mathbf{J} :T^*Q \to \mg^*$, is defined by
\[
\scp{\mathbf{J} \lp \alpha_q \rp}{\xi}_{\mg^*\times \mg} = \scp{\alpha_q}{q\cdot \xi}_{T^*Q\times TQ}
\]
for all $\xi \in \mg$ and $\alpha_q \in T_q^*Q$. The cotangent lift momentum map induces a map $\bar { \mathbf{J} }: T^*Q \to \operatorname{Ad}^*Q$ defined by
\[
\bar{ \mathbf{J} }\lp \alpha_q\rp = \llsb q, \mathbf{J} \lp \alpha_q\rp \rrsb_{\mg^*}.
\]

\section{Lagrangian reduction with a force field}\label{sec:Reduction}

Un-reduction is closely related to classical Lagrangian reduction by symmetry.  This section briefly reviews the main results of Lagrangian reduction by symmetry.  The description given here by no means does the topic justice and the interested reader is referred to \cite{CeMaRa2001,CeHoMaRa1998,ElGBHoRa2009,HoMaRa1998,MaRa2002} to name just a few works on the subject. We slightly generalize the Lagrange reduction theorem by including the effect of an equivariant force field. Such a field is encoded by a fiber-preserving map $F : TQ\rightarrow T^*Q$ over the identity, see \S7.8 in \cite{MaRa2002}. The $G$-equivariance property reads
\[
F(v_q \cdot g)= F(v _q ) \cdot g,\quad\text{for all $g\in G$},
\]
where $\cdot $ means the tangent and cotangent lifted actions, respectively.

Let $L:TQ \to \mR$ be a $G$-invariant Lagrangian under the tangent lifted action of $G$ on $TQ$ and consider the associated reduced Lagrangian $\ell:TQ/G \to \mR$. Fix a connection $ \mathcal{A} $ on $Q$ and consider the vector bundle isomorphism $\al_\cA: TQ/G \to T\Sigma \oplus \Ad \, Q$ over $ \Sigma $ given by
\begin{equation}
\al_ \mathcal{A} \lp \lsb v_q\rsb_G\rp = T\pi\lp v_q\rp \oplus \bar \cA\lp v_q\rp.
\label{eq:bundleiso}
\end{equation}
The inverse is
\[
\al^{-1}_ \mathcal{A} \lp u_\rho \oplus \bar \xi\rp = \lsb \hor_q u_\rho + \bar\xi_Q\lp q\rp\rsb_G.
\]
Thus the reduced Lagrangian $\ell$ may be regarded as a map $\ell:T\Sigma\oplus \operatorname{Ad}Q \to \mR$. 

Consider a $G$- equivariant force field $F:TQ\rightarrow T^*Q$. We define the reduced force fields $F^{\Sigma}$ and $F^{\operatorname{Ad}}$ by
\begin{align}
F^{\operatorname{ Ad}} &: T \Sigma \oplus \operatorname{ Ad}Q\rightarrow \operatorname{ Ad}^*Q,\quad F^ \Sigma ( \rho , \dot \rho , \bar{\sigma}):=\bar{ \mathbf{J} }( F (q, \dot q))\\
F^ \Sigma &: T \Sigma \oplus \operatorname{ Ad}Q\rightarrow T^* \Sigma ,\quad \left\langle F^ \Sigma ( \rho , \dot \rho , \bar{\sigma}), v_ \rho \right\rangle :=\left\langle  F( q, \dot q), \operatorname{Hor}_q(v_ \rho )\right\rangle, 
\end{align}
where $(q, \dot q)\in TQ$ are such that $ \alpha _{\mathcal{A}}\left(  [q, \dot q]_G \right) =( \rho , \dot \rho , \bar{\sigma})$. Note that by $G$-equivariance of $F$, the right hand side does not depend on the choice of $(q, \dot q)$ in the equivalence class. The force field $F$ is related to  $F^{ \Sigma }$ and $F^{\operatorname{Ad}}$as follows:
\begin{align*}
\left\langle F(q,\dot q),w_q \right\rangle &=\left\langle F(q, \dot q ),  \operatorname{Hor}_q(T\pi(w_q))\right\rangle  + \left\langle F(q, \dot q),  \left( \mathcal{A} (w_q) \right) _Q(q) \right\rangle \\
&=\left\langle  \operatorname{Hor}^*_q( F(q, \dot q)), T\pi(w_q) \right\rangle +\left\langle\mathbf{J} (F(q, \dot q)), \mathcal{A} (w_q) \right\rangle_{ \mathfrak{g}  ^* \times \mathfrak{g}  }\\
&= \left\langle F^\Sigma (\rho, \dot \rho , \bar{ \sigma }), T\pi(w_q)\right\rangle + \left\langle \bar {\mathbf{J}}(F(q, \dot q)), \bar{ \mathcal{A} }(w_q)\right\rangle \\
&= \left\langle F^\Sigma (\rho, \dot \rho , \bar{ \sigma }), T\pi(w_q)\right\rangle+ \left\langle F^{\operatorname{Ad}}(\rho, \dot \rho , \bar{ \sigma }), \bar{ \mathcal{A} }(w_q) \right\rangle. 
\end{align*}
It will be convenient to define also the vertical and horizontal part of $F$ by considering the decomposition $T^*Q=VQ^\circ\oplus HQ^\circ$, where
\begin{align} 
VQ^\circ:&=\{ \alpha _q\in T^*Q\mid \left\langle \alpha _q , v _q \right\rangle =0,\;\text{for all $v_q\in V_qQ$}\}\quad\text{and}\\
HQ^\circ:&=\{ \alpha _q\in T^*Q\mid \left\langle \alpha _q , v _q \right\rangle =0,\;\text{for all $v_q\in H_qQ$}\},
\end{align} 
are the annihilators of the vertical and horizontal distributions, respectively. We thus define
\[
F^h:TQ\rightarrow VQ^\circ\quad\text{and}\quad F^v:TQ\rightarrow HQ^\circ
\]
as $F^h(v_q):=P_h^*(F(v_q))\in VQ^\circ$ and $F^v(v_q):= P_v^*(F(v_q))\in HQ^\circ$ and obtain the relations
\begin{align}\label{relation_Fh_Fv}
F^h(v_q)=\pi ^\ast \left( F^ \Sigma ( \alpha _ \mathcal{A} (v_q))\right) \quad\text{and}\quad F^v(v_q)=\bar{ \mathcal{A} }^*\left( F^{ \operatorname{ Ad}}( \alpha _ \mathcal{A} (v_q))\right).
\end{align}

We now state the Lagrangian reduction theorem (\cite{CeHoMaRa1998,CeMaRa2001,ElGBHoRa2009}) in the case that external forces are allowed.
\medskip

\begin{thm}[Classical Lagrangian reduction with forces]\label{LPRed} Consider a curve $q(t)\in Q$ and define the two curves
\[
\rho(t):=\pi(q(t))\in \Sigma \quad\text{and}\quad  \bar{ \sigma} (t) := \bar {\mathcal{A}} (\dot q(t))\in \operatorname{Ad}Q,
\]
where $ \mathcal{A} $ is a fixed connection on $\pi:Q\rightarrow \Sigma$. Then, the following statements are equivalent:
\begin{enumerate}
\item Hamilton's variational principle,
\begin{equation}
\de \int^{t_2}_{t_1} L\lp q(t), \dot q(t)\rp\, dt  + \int^{t_2}_{t_1} F(q(t), \dot q(t))\de q(t)\, dt = 0,
\label{eq:varprin}
\end{equation}
holds for variations $\de q(t)$ of $q(t)$ vanishing at the endpoints.
\item The curve $q(t)$ satisfies the Euler-Lagrange equations with external forces:
\begin{align} \label{EL_forces}
\tilde{ \nabla }\dede{L}{\dot q} -\frac{\tilde\nabla L }{\delta q }  = F(q, \dot q),
\end{align} 
where $\tilde\nabla_t$ is the time covariant derivative associated to an arbitrary torsion free affine connection on $Q$ and $\frac{\tilde\nabla L}{\delta q }$ is the partial derivative of $L$ relative to the same connection. 
\item The constrained variational principle (of Lagrange-d'Alembert type)
\begin{align*} 
&\de \int^{t_2}_{t_1} \ell\lp\rho(t), \dot \rho(t),\bar 
\sigma (t)\rp\, dt\\
&\qquad + \int^{t_2}_{t_1}\left( \left\langle F^{\operatorname{Ad}Q}( \rho (t), \dot \rho (t), \bar{ \sigma }(t)), \bar \eta(t) \right\rangle  + \left\langle F^{\Sigma}( \rho (t), \dot \rho (t), \bar{ \sigma }(t)) , \de\rho(t) \right\rangle \right)  \,dt= 0
\end{align*}
holds, using variations $\de\rho(t)$ of $\rho(t)$ vanishing at the endpoints, and variations of $\bar\sigma (t)$ of the form
\[
\de\bar\sigma(t) = D_t\bar\eta(t) - \lsb \bar\sigma(t), \bar \eta(t)\rsb -\bar\cB\lp \dot \rho(t), \de\rho(t)\rp
\]
vanishing at the endpoints.
\item The Lagrange-Poincar\'e equations for $\ell$ with external forcing $F^{\operatorname{Ad}Q}$ and $F^{\Sigma}$ hold:
\begin{equation}\label{LP12}
\left\{\begin{array}{l}
\vspace{0.2cm}\displaystyle D_t\dede{\ell}{\bar \sigma} + \operatorname{ad}^*_{\bar \sigma}\dede{\ell}{\bar \sigma} = F^{\operatorname{Ad}}(\rho , \dot \rho , \bar{ \sigma })\\
\displaystyle\nabla_t\dede{\ell}{\dot \rho} - \frac{ \nabla \ell}{ \delta \rho} =F^\Sigma(\rho , \dot \rho , \bar{ \sigma }) -\scp{\dede{\ell}{\bar \sigma}}{{\mathbf{i} }_{\dot \rho}\bar \cB},
\end{array}\right.
\end{equation}
where $ \mathbf{i} _{ \dot{ \rho }} \bar{\mathcal{B}}$ denotes the $ \operatorname{ Ad}Q$-valued one-form on $\Sigma $ defined by
\[
\mathbf{i} _{ \dot{ \rho }} \bar{\mathcal{B}}:= \bar{\mathcal{B} }( \dot{ \rho }, \cdot ),
\]
$\nabla_t$ denotes the time covariant derivative relative to a torsion-free affine connection $ \nabla $ on $\Sigma$, and $ \frac{ \nabla \ell}{ \delta \rho}$ denotes the partial derivative of $\ell$ relative to the connection $ \nabla $.
\end{enumerate}
\end{thm}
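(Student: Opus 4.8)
The plan is to prove the four-way equivalence through three linking arguments: the unreduced pair (1) $\Leftrightarrow$ (2) by a direct variational computation on $Q$, the reduction bridge (1) $\Leftrightarrow$ (3) by pushing the forced action functional down along $\pi$, and the reduced pair (3) $\Leftrightarrow$ (4) by the analogous Euler--Lagrange computation on $T\Sigma\oplus\operatorname{Ad}Q$. For (1) $\Leftrightarrow$ (2) I would fix an auxiliary torsion-free affine connection $\tilde\nabla$ on $Q$, expand $\delta\int L\,dt$, and integrate by parts the $\delta\dot q$ terms; since $\delta q$ vanishes at the endpoints the boundary contributions drop, leaving the pairing of $\delta q$ with $\tilde\nabla_t\dede{L}{\dot q}-\frac{\tilde\nabla L}{\delta q}$. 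Adding the forcing integral $\int F(q,\dot q)\,\delta q\,dt$ and invoking the fundamental lemma of the calculus of variations (arbitrariness of $\delta q$) yields exactly \eqref{EL_forces}. The only point requiring care is verifying that the connection-dependent pieces assemble into the stated covariant objects, which is standard.

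The heart of the argument is the reduction bridge (1) $\Leftrightarrow$ (3). Using $G$-invariance of $L$, the action $\int L(q,\dot q)\,dt$ descends through the isomorphism $\al_{\mathcal{A}}$ of \eqref{eq:bundleiso} to $\int\ell(\rho,\dot\rho,\bar\sigma)\,dt$ with $\rho=\pi(q)$ and $\bar\sigma=\bar{\mathcal{A}}(\dot q)$. Setting $\bar\eta:=\bar{\mathcal{A}}(\delta q)\in\operatorname{Ad}Q$, I would show that an arbitrary endpoint-fixed variation $\delta q$ induces a free variation $\delta\rho=T\pi(\delta q)$ together with the constrained variation
\[
\delta\bar\sigma = D_t\bar\eta - [\bar\sigma,\bar\eta] - \bar{\mathcal{B}}(\dot\rho,\delta\rho).
\]
This identity is the crux: it is obtained by differentiating $\bar\sigma=\bar{\mathcal{A}}(\dot q)$ and $\bar\eta=\bar{\mathcal{A}}(\delta q)$ and comparing the mixed time/variation derivatives, where the antisymmetric part produces the curvature through the reduced structure equation for $\bar{\mathcal{B}}$, and the bracket arises from the $\operatorname{Ad}Q$-covariant derivative formula. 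For the forcing I would apply the pointwise decomposition established just before the theorem with $w_q=\delta q$, namely
\[
\scp{F(q,\dot q)}{\delta q}=\scp{F^\Sigma}{T\pi(\delta q)}+\scp{F^{\operatorname{Ad}}}{\bar{\mathcal{A}}(\delta q)},
\]
so that $\int F(q,\dot q)\,\delta q\,dt$ becomes precisely the $\scp{F^\Sigma}{\delta\rho}+\scp{F^{\operatorname{Ad}}}{\bar\eta}$ forcing of statement (3); equivariance of $F$ guarantees these reduced forces are well defined on the quotient.

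The final link (3) $\Leftrightarrow$ (4) is the reduced Euler--Lagrange computation. Taking variations in (3), I would integrate by parts in the $\bar\sigma$-slot using $D_t$ and in the $\rho$-slot using $\nabla_t$. Collecting the coefficient of the free variation $\bar\eta$ and handling the $-[\bar\sigma,\bar\eta]$ term via the definition of $\operatorname{ad}^*$ produces the vertical equation $D_t\dede{\ell}{\bar\sigma}+\operatorname{ad}^*_{\bar\sigma}\dede{\ell}{\bar\sigma}=F^{\operatorname{Ad}}$; collecting the coefficient of $\delta\rho$, and transferring the curvature term $-\bar{\mathcal{B}}(\dot\rho,\delta\rho)$ onto $\dede{\ell}{\bar\sigma}$ through the pairing, produces the horizontal equation with the coupling $\scp{\dede{\ell}{\bar\sigma}}{\mathbf{i}_{\dot\rho}\bar{\mathcal{B}}}$. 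Arbitrariness of $\delta\rho$ and $\bar\eta$ subject only to endpoint vanishing then gives the full system \eqref{LP12}.

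The main obstacle I anticipate is the rigorous derivation of the constrained-variation formula for $\delta\bar\sigma$, and in particular pinning down the sign and tensorial meaning of the term $-\bar{\mathcal{B}}(\dot\rho,\delta\rho)$. The difficulty is that $\bar{\mathcal{A}}(\dot q)$ and $\bar{\mathcal{A}}(\delta q)$ take values in $\operatorname{Ad}Q$, so their derivatives must be taken covariantly, and the exchange of $D_t\bar\eta$ with the variational derivative of $\bar\sigma$ is valid only up to the curvature of $\mathcal{A}$ --- which is exactly the mechanism feeding $\bar{\mathcal{B}}$ into the Lagrange--Poincar\'e equations. Once this formula is secured, the remainder is bookkeeping with the pairings and the already-established force decomposition \eqref{relation_Fh_Fv}.
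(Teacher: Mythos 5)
The paper does not prove this theorem: it is stated as a (slight generalization of a) known result and deferred to the cited references \cite{CeHoMaRa1998,CeMaRa2001,ElGBHoRa2009}, with the only new ingredient being the force decomposition $\left\langle F,\delta q\right\rangle=\left\langle F^\Sigma,T\pi(\delta q)\right\rangle+\left\langle F^{\operatorname{Ad}},\bar{\mathcal{A}}(\delta q)\right\rangle$ computed just before the statement. Your outline reproduces the standard Lagrange--Poincar\'e argument of those references --- including the correct constrained-variation formula $\delta\bar\sigma=D_t\bar\eta-[\bar\sigma,\bar\eta]-\bar{\mathcal{B}}(\dot\rho,\delta\rho)$, which is indeed the crux --- and correctly grafts the forcing onto it via that decomposition, so it is essentially the same approach as the proof the paper relies on.
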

\vspace{1em}

\begin{rema}{\rm
We have chosen to write the Euler-Lagrange equations \eqref{EL_forces} with the help of a torsion free connection $\tilde \nabla $ on $Q$, in order to have a global (i.e. coordinate independent) formulation. One can write as usual these equations locally as
\[
\frac{d}{dt} \dede{L}{\dot q} - \dede{L}{q} = F(q, \dot q),
\]
where $ \dede{L}{q}$ denotes the partial derivative of $L$ in a local chart.
}
\end{rema}

\begin{rema}[Energy and momentum map] {\rm In the presence of a force field,  the time derivative of the energy $E(q, \dot q)=\left\langle  \dot q, \frac{\partial  L}{\partial \dot q } \right\rangle - L(q, \dot q)$ along a solution of \eqref{EL_forces} is
\[
\frac{d}{dt} E( q(t), \dot q(t))= \left\langle F(q(t), \dot q(t)), \dot q(t) \right\rangle.
\]
A direct computation, using the same arguments as in \S 2.7 \cite{Ma1992},
shows that in the presence of forces, Noether's theorem is replaced by the relation
\begin{equation}\label{Noether_with_forces}
\frac{d}{dt} \mathbf{J} \left( \frac{\delta  L}{\delta \dot q } \right) =\mathbf{J} \left( F(q, \dot q) \right)=\mathbf{J} \left( F^v(q, \dot q) \right).
\end{equation} 
}
\end{rema}

\begin{rema}[Euler-Poincar\'e reduction] {\rm In the particular case $Q=G$, the $G$-invariant force field is completely determined by a smooth map $f: \mathfrak{g}  \rightarrow \mathfrak{g}  ^\ast $ and we have $\left\langle F(v_g), w_g \right\rangle = \left\langle f( v _g g ^{-1} ), w _g g ^{-1} \right\rangle$. In this case, the Lagrange-Poincar\'e equations \eqref{LP12} recover the Euler-Poincar\'e equations with force
\[
\frac{d}{dt} \frac{\delta \ell }{\delta \xi  } + \operatorname{ ad}^*_ \xi \frac{\delta \ell }{\delta \xi  } = f( \xi ).
\]
In this particular case, one observes that these equations are equivalent to Noether's formulation \eqref{Noether_with_forces}, since they can both be written as
\[
\frac{d}{dt} \operatorname{Ad}_{g} ^\ast \frac{\delta \ell }{\delta  \xi } =  \operatorname{Ad}_{g} ^\ast \left( f( \xi )\right) ,\quad \xi(t)= \dot g(t) g (t) ^{-1}.
\]
In the absence of forces, this recovers the usual fact that Euler-Poincar\'e equations are a direct consequence of Noether's theorem.
}
\end{rema}

\section{Distortions of projected dynamics}\label{sec:Dist}

In order to formulate an un-reduction procedure, we must relate the Lagrange-Poincar\'e equations \eqref{LP12} with the Euler-Lagrange equations (\ref{eq:EL}) on $\Sigma$ using the information obtained in Theorem \ref{LPRed}.  Here, differences between the Lagrange-Poincar\'e equations and the Euler-Lagrange equations are called {\em distortions}.  This terminology is suitable since, in the context of un-reduction, such differences are unwanted and the main goal of the un-reduction procedure is to eliminate them.

The conception of certain terms in the reduced equations as unwanted barriers to the objective stands in contrast with applications of reduction by symmetry, where such differences instead attain physical meaning.  Examples of this occur, for example, in \cite{ElGBHoPuRa2009} in the context of molecular strand dynamics where micro-structure of a strand couples with the reduced filament dynamics, or again in Kaluza-Klein constructions where curvature terms are understood as an electromagnetic field.


 There are two distortions to contend with:
\begin{enumerate}
\item {\bf Coupling distortion:} One often specifies a $G$-invariant Lagrangian $L:TQ \to \mR$ as opposed to a Lagrangian on $T\Sigma$, since coordinates on $\Sigma$ are often unavailable.  Since (\ref{eq:bundleiso}) shows that $TQ/G \cong T\Sigma \oplus\operatorname{Ad}Q$, the Lagrangian $L$ certainly depends on $\operatorname{Ad}Q$ if it is non-degenerate.  We must therefore remove any coupling between the $\operatorname{Ad}Q$ dependencies and the $T\Sigma$ dependencies from the unreduced Euler-Lagrange equations on $Q$.  In general, there is no preferred way to deal with these dependencies, and therefore this issue must be bourne in mind when setting up each particular un-reduction problem.  This ambiguity of the Lagrangian on $TQ$ is referred to hereafter as {\em coupling distortion}.

\item {\bf Curvature distortion:} The Lagrange-Poincar\'e equations \eqref{LP12} differ from the Euler-Lagrange equations (\ref{eq:EL}) in that the right hand side contains a driving term that arises from the curvature of the principal connection.  This driving term is referred to as {\em curvature distortion}.  We shall not discuss the nature of curvature distortion further, and point it out only for later reference when we develop methods for un-reduction in \S\ref{sec:strategies}.
\end{enumerate}

We now describe a particular class of $G$-invariant Lagrangians $L:TQ\rightarrow \mathbb{R}  $ well appropriate for the formulation of un-reduction. We say that the $G$-invariant Lagrangian {\em decouples relative to a connection $ \mathcal{A} $} if it can be written as the sum of two $G$-invariant Lagrangians, $L=L^h+L^v$, where $L^h:TQ\rightarrow \mathbb{R}$ and   $L^v:TQ\rightarrow \mathbb{R}$ are such that
\[
L^h(v_q)=L^h(P^h(v_q)) \quad\text{and}\quad L^v(v_q)=L^v(P^v(v_q)),\quad\text{for all $v_q\in TQ$},
\]
that is, $L^h(v_q)$ depends only on the horizontal part of $v_q$ and $L^v(v_q)$ depends only on the vertical part of $v_q$. Now we observe that the bundle isomorphism $\alpha _ \mathcal{A} :TQ/G \rightarrow T\Sigma\oplus\operatorname{Ad}Q$ induces two bundle isomorphisms
\[
\eval{\al_\cA}{VQ/G}:VQ/G \to \operatorname{Ad}Q \qquad \mbox{ and } \qquad \eval{\al_\cA}{HQ/G}:HQ/G  \to T\Sigma.
\]
Therefore, $L^h$ is completely determined by a Lagrangian $\ell^\Sigma:T\Sigma\to \mR$ and $L^v$ is completely determined by a function $\ell^{\operatorname{Ad}}:\operatorname{Ad}Q \to \mR$ through the relations
\[
L^h\left( \operatorname{Hor}_q(v_\rho ) \right) =\ell^ \Sigma ( v _\rho )\quad\text{and}\quad L^v \left(  \xi _Q (q) \right) =\ell^ {\operatorname{Ad}}\left( \llsb q,\xi \rrsb_{ \mathfrak{g}  }\right).
\]
Note also that
\begin{equation}\label{properties_fiber_der}
\frac{\delta  L^h}{\delta \dot q } = \pi^* \frac{\delta  \ell^ \Sigma }{\delta \dot \rho  } \in VQ^\circ\quad\text{and}\quad \frac{\delta  L^v}{\delta \dot q } =\bar{\mathcal{A}}^*\frac{\delta  \ell^{\operatorname{Ad}}}{\delta  \bar{\sigma}} \in HQ^\circ,
\end{equation}
where the last equality implies the relation $\frac{\delta  \ell^{\operatorname{Ad}}}{\delta  \bar{\sigma}} =\bar{ \mathbf{J} }\left(  \frac{\delta  L^v}{\delta \dot q }\right) $.

\begin{example}[Geodesic problems]\label{ex:geo}{\rm 
Let $\cG$ be a $G$-invariant Riemannian metric on $Q$, and consider the Lagrangian $L(v_q):=\frac{1}{2} \|v_q\|_{ \mathcal{G} }^2 $ associated to the metric. We will show that $L$ decouples relative to a connection and we will compute the induced functions $\ell^ \Sigma $ and $\ell ^ { \operatorname{ Ad}}$.

Recall that a $G$-invariant Riemannian metric naturally induces a connection defined by $H_qQ = \lp V_qQ\rp^\perp$ and called the mechanical connection. The associated connection form $ \mathcal{A} _{ \operatorname{mech}}$ is determined by the the relation
\[
\cG_q\lp v_q, q\cdot \sigma \rp = \scp{\mI(q)\cA_{\operatorname{mech}}\lp v_q\rp}{\sigma}_{\mg^*\times \mg},\quad\text{for all $ \sigma \in \mathfrak{g}$},
\]
where $ \mathbb{I}  (q):\mg \to \mg^*$ is the locked inertia tensor defined by
\[
\cG_q\lp q\cdot \nu, q\cdot \sigma\rp = \scp{\mI(q)\nu}{\mu}_{\mg^*\times\mg},\quad\text{for all $ \nu, \sigma \in \mathfrak{g} $}.
\]
The Lagrangian $L(v _q )=\frac{1}{2}\left\| v _q \right\|^2_\cG$ obviously decouples relative to the mechanical connection, since we can write
\[
L\lp v _q \rp = \frac{1}{2}\left\| v _q \right\|^2_\cG = \frac{1}{2}\left\|P_v\lp v _q \rp\right\|^2_{\cG} + \frac{1}{2} \left\|P_h\lp v _q \rp\right\|^2_{\cG}=L^v(v_q)+L^h(v_q).
\]
In order to write $\ell^{ \operatorname{ Ad}}$ explicitly, we need to introduce the vector bundle isomorphism $\cI:\operatorname{Ad}Q \to \operatorname{Ad}^* Q$  over the identity defined by
\[
\cI\lp \llsb q, \sigma \rrsb_{\mg}\rp = \llsb q, \mI(q)\sigma\rrsb_{\mg^*}.
\]
The induced functions $\ell^{\operatorname{Ad}}: \operatorname{Ad}Q\rightarrow \mathbb{R}  $ and $\ell^\Sigma: T \Sigma \rightarrow \mathbb{R}  $ are thus given by
\[
\ell^{\operatorname{Ad}}\lp \bar \sig\rp = \frac{1}{2}\scp{\cI\lp \bar \sig\rp}{\bar \sig} \qquad \mbox{ and } \qquad \ell^{\Sigma}\lp \rho, \dot \rho\rp = \frac{1}{2}\|\dot \rho\|^2_{\pi_*\cG},
\]
where $\pi_*\cG$ is the Riemannian metric on $\Sigma$ induced by $\cG$.$\qquad \blacklozenge$}
\end{example}

\paragraph{\bf Distorsion in the Lagrange-Poincar\'e equations.} Example \ref{ex:geo} shows that the class of geodesic Lagrangians is contained within the restricted class of Lagrangians on $Q$ that decouples relative to an appropriate choice of connection form. In this case, the reduced Lagrangian $\ell:TQ/G \to \mR$ takes the form $\ell = \ell^\Sigma + \ell^{\operatorname{Ad}}$.  Consequently, using the relations
\[
\frac{\delta  \ell}{\delta \rho  } =\frac{\delta  \ell^ \Sigma }{\delta \rho  }+\frac{\delta \ell^{\operatorname{ Ad}} }{\delta \rho  },\quad \frac{\delta  \ell}{\delta \dot \rho  }= \frac{\delta  \ell^ \Sigma }{\delta \dot \rho  },\quad  \frac{\delta  \ell}{\delta \bar \sigma   }=\frac{\delta  \ell^{\operatorname{ Ad}}}{\delta \bar \sigma},
\]
the Lagrange-Poincar\'e equations \eqref{LP12} read:
\begin{equation}\label{eq:LP12a}
\left\{\begin{array}{l}
\vspace{0.2cm}\displaystyle D_t \dede{\ell^{\operatorname{Ad}}}{\bar \sigma} - \ad^*_{\bar\sigma}\dede{\ell^{\operatorname{Ad}}}{\bar \sigma} = F^{{\operatorname{Ad}}}( \rho, \dot \rho , \bar \sigma)\\
\displaystyle\nabla_t\dede{\ell^\Sigma}{\dot \rho} - \frac{\nabla \ell^\Sigma}{\delta \rho} =F^\Sigma( \rho, \dot \rho , \bar \sigma  )  + \frac{\nabla \ell^{\operatorname{ Ad}}}{\delta \rho}  - \scp{\dede{\ell^{\operatorname{Ad}}}{\bar \sigma}}{\mathbf{i} _{\dot \rho}\bar\cB}.
\end{array}
\right.
\end{equation}

Here the third term on the right hand side of the second equation in \eqref{eq:LP12a} is the curvature distortion, while the second term gives an explicit form of the coupling distortion.  Note that the left hand side is the Euler-Lagrange operator on $\Sigma$ for the Lagrangian $\ell^\Sigma: T \Sigma \rightarrow \mathbb{R} $.  Thus, the goal of the un-reduction method is to selecting solutions to a Lagrangian system with forcing on $Q$ such that the right hand side of the second equation vanishes.  Such solutions in $Q$ project onto solutions of the Euler-Lagrange equations on $\Sigma$ with Lagrangian $\ell^\Sigma$.

\section{Un-reduction}\label{sec:strategies}

Having compared the Lagrange-Poincar\'e equations with the Euler-Lagrange equations on $T\Sigma$, we proceed by using the external force $F^\Sigma$ to cancel the right hand side of the second equation in \eqref{eq:LP12a}. Clearly we need to define the reduced force $F^ \Sigma:T \Sigma \oplus \operatorname{ Ad}Q\rightarrow T^* \Sigma $ by
\begin{equation}
F^\Sigma(\rho, \dot \rho , \bar{ \sigma }) : = \left\langle \frac{\delta  \ell^{\operatorname{Ad}}}{\delta   \bar \sigma},  \mathbf{i} _{\dot \rho}\bar\cB \right\rangle-\frac{\delta  \ell^{\operatorname{Ad}}}{\delta \rho} .
\label{eq:sigforce}
\end{equation}
For the moment we do not specify the other component $F^{ \operatorname{ Ad}}:T \Sigma \oplus \operatorname{ Ad}Q\rightarrow \operatorname{ Ad}^*Q$ of the reduced force. However, once $F^{ \operatorname{Ad}}$ is fixed, then the force field $F:TQ\rightarrow T^*Q$ is completely determined by the equality
\[
F(v_q)=F^h(v_q)+F^v(v_q)=\pi ^\ast \left( F^ \Sigma ( \alpha _ \mathcal{A} (v_q))\right)+\bar{ \mathcal{A} }^*\left( F^{ \operatorname{ Ad}}( \alpha _ \mathcal{A} (v_q))\right)
\]
as was shown in \eqref{relation_Fh_Fv}.

\subsection{The un-reduction theorem}

In order to compute the unreduced Euler-Lagrange equations associated to this forcing term $F$, we apply the variational principle \eqref{eq:varprin}. Using the same notations as in Theorem \ref{LPRed} we have
\begin{align*}
0&= \delta \int_0^1 L^h(q, \dot q) dt+\delta \int_0^1 L^v(q, \dot q) dt\\
& \qquad +\int_0^1 F^h(q, \dot q) \delta q dt+  \int_0^1 F^v(q, \dot q) \delta q dt\\
&= \delta \int_0^1 L^h(q, \dot q) dt+\int_0^1 \left\langle \frac{\delta \ell^{\operatorname{ Ad}} }{\delta\bar{ \sigma }} , \delta \bar \sigma \right\rangle + \left\langle \frac{\delta \ell^{\operatorname{ Ad}} }{\delta \rho } , \delta \rho \right\rangle dt\\
&\qquad +\int_0^1 \left\langle F^\Sigma (\rho , \dot \rho ,\bar{ \sigma }), \delta \rho  \right\rangle  dt+  \int_0^1 \left\langle F^{ \operatorname{ Ad}}(\rho , \dot \rho ,\bar{ \sigma }), \bar{ \mathcal{A} }( \delta q) \right\rangle  dt\\
&= \delta \int_0^1 L^h(q, \dot q) dt+\int_0^1 \left\langle \frac{\delta \ell^{\operatorname{ Ad}} }{\delta\bar{ \sigma }} , \delta \bar \sigma \right\rangle+\int_0^1 \left\langle \frac{\delta  \ell^{\operatorname{Ad}}}{\delta   \bar \sigma}, \bar\cB ( \dot \rho , \delta \rho )\right\rangle\\
&\qquad +  \int_0^1 \left\langle F^{ \operatorname{ Ad}}(\rho , \dot \rho ,\bar{ \sigma }), \bar{ \mathcal{A} }( \delta q) \right\rangle  dt,
\end{align*}
where we used \eqref{eq:sigforce} in the last equality.
Now the expression for the constrained variation $ \delta \bar{ \sigma }$ yields
\begin{align*}
\int_0^1 \left\langle \frac{\delta \ell^{\operatorname{ Ad}} }{\delta\bar{ \sigma }} , \delta \bar \sigma+ \bar\cB ( \dot \rho , \delta \rho )\right\rangle dt &=\int_0^1 \left\langle \frac{\delta \ell^{\operatorname{ Ad}} }{\delta\bar{ \sigma }} , D _t \bar{\eta}-[\bar{ \sigma }, \bar{\eta}]\right\rangle dt\\
&= \int_0^1 \left\langle \bar{ \mathbf{J} } \left( \frac{\delta L^v}{\delta\dot q }\right)  , \llsb q(t), \dot{\eta}(t)\rrsb_{\mg^*}\right\rangle dt \\
&=\int_0^1 \left\langle \mathbf{J} \left( \frac{\delta L^v}{\delta\dot q }\right)  ,\dot{\eta}(t)\right\rangle dt\\
&= -\int_0^1 \left\langle \frac{d}{dt} \mathbf{J} \left( \frac{\delta L^v}{\delta\dot q }\right), \mathcal{A} ( \delta q)\right\rangle dt,
\end{align*}
since $\bar{\eta}(t)= \llsb q(t), \mathcal{A} ( \delta q(t)) \rrsb_{ \mathfrak{g}  }$. Therefore, the variational principle yields the Euler-Lagrange equations
\begin{align}\label{EL_h_v}
\frac{d}{dt} \frac{ \delta  L^h}{ \delta  \dot q}-  \frac{ \delta    L^h}{ \delta q}= \bar { \mathcal{A} }^* F^{ \operatorname{ Ad}}( \rho , \dot \rho , \bar \sigma )- \mathcal{A} ^* \frac{d}{dt} \mathbf{J} \left(\frac{ \delta  L^v}{ \delta  \dot q} \right) .
\end{align} 
These equations may be split into horizontal and vertical parts by observing that  the terms on the right hand side belong to $HQ^\circ$ and the terms on the left hand side belong to $VQ^\circ$ since we have
\begin{equation} \label{derivative_Lh}
\frac{\delta L ^h  }{\delta  \dot q} =\pi^* \frac{\delta \ell ^\Sigma  }{\delta  \dot \rho } \quad\text{and}\quad \frac{\delta L ^h  }{\delta   q} =\pi^* \frac{\delta \ell ^\Sigma  }{\delta  \rho }.
\end{equation} 
Thus, the un-reduction equations \eqref{EL_h_v} can be equivalently written as
\begin{eqnarray}\label{eq:un1}
&&\frac{d}{dt}  \dede{L^h}{\dot q} - \dede{L^h}{q}=0\\
&&\frac{d}{dt}  \mathcal{A} ^* \mathbf{J} \left(\dede{L^v}{\dot q} \right)  = F^v(q, \dot q),
\label{eq:un3}
\end{eqnarray}
where $F^v: TQ\rightarrow HQ^\circ$ is completely determined by $F^{ \operatorname{Ad}}$.
Note that by the first equality in \eqref{derivative_Lh}, we have the relation
\[
\mathbf{J} \left(\dede{L^h}{\dot q}  \right)=0.
\]
Note also that equation \eqref{eq:un3} can be rewritten as
\begin{equation}
\frac{d}{dt}  \mathbf{J} \left(\dede{L^v}{\dot q} \right)  = \mathbf{J} \left( F^v(q, \dot q) \right) ,
\label{eq:AltVert}
\end{equation}
since $F^v(q, \dot q)\in HQ^ \circ$. This equation can also be deduced from the relation \eqref{Noether_with_forces}.

\rem{
A simple example of a force $F^v$ is obtained by choosing a $G$-equivariant map $ \gamma :Q\rightarrow \mathfrak{g}  ^\ast $ and defining
\[
F^v(v_q):=\mathcal{A} ^*(\gamma (q))\in HQ^\circ.
\]
The $G$-equivariance is checked as follows:
\[
\left\langle F^v(v_q \cdot g), w_q \cdot g \right\rangle = \left\langle \gamma ( q \cdot g), \mathcal{A} ( w _q \cdot g) \right\rangle = \left\langle \operatorname{Ad}^\ast _g \gamma (q), \operatorname{Ad}_{g ^{-1}}\mathcal{A}(w _q ) \right\rangle=\left\langle F^v(v_q), w_q \right\rangle
\]
For this special choice, equation \eqref{eq:un3} reads simply
\[
\frac{d}{dt}  \mathbf{J} \left(\dede{L^v}{\dot q} \right)  = \gamma (q).
\]
}


The results obtained so far are summarized in the following theorem:

\begin{thm}[Un-reduction]\label{thm:un} Let $\ell^ \Sigma :T \Sigma \rightarrow \mathbb{R}  $ be an arbitrary Lagrangian defined on the base of a principal bundle $\pi: Q\rightarrow Q/G=\Sigma$. Let $L=L^h+L^v$ be a $G$-invariant Lagrangian such that $L^h\circ P_h=L^h$ and $L^v\circ P_v=L^v$, where $L^h$ is uniquely determined by $\ell^ \Sigma $, and choose an arbitrary force $F^v:TQ\rightarrow HQ^\circ$.

Then, solutions $q(t)$ of the  equations \eqref{eq:un1} - \eqref{eq:un3} project to solutions $ \rho (t):=\pi(q(t))$ of the Euler-Lagrange equations for $\ell^\Sigma$.
\end{thm}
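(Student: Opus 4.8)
The plan is to prove the projection statement by a variational argument that rests on a single structural fact: by construction the horizontal Lagrangian $L^h$ is just $\ell^\Sigma$ transported to $Q$, so that equation \eqref{eq:un1} is nothing but the Euler--Lagrange equation on $Q$ for $L^h$, and its action functional descends to $\Sigma$. I expect the vertical equation \eqref{eq:un3} to be irrelevant for the projection itself---it only pins down the otherwise unconstrained vertical motion of $q(t)$, which is exactly the modelling freedom that un-reduction exploits.

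First I would establish, along any curve $q(t)$ with $\rho=\pi\circ q$, the pointwise identity $L^h(q,\dot q)=\ell^\Sigma(\rho,\dot\rho)$. This comes straight from the hypotheses: $L^h(q,\dot q)=L^h(P_h(\dot q))$ by $L^h\circ P_h=L^h$, then $P_h(\dot q)=\hor_q(T\pi(\dot q))=\hor_q(\dot\rho)$, and finally $L^h(\hor_q(\dot\rho))=\ell^\Sigma(\dot\rho)$ by the relation defining $L^h$ from $\ell^\Sigma$. Hence the two action integrals coincide,
\[
\int_0^1 L^h(q,\dot q)\,dt=\int_0^1 \ell^\Sigma(\rho,\dot\rho)\,dt,
\]
and the right-hand side depends on $q$ only through $\rho$. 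From the first identity in \eqref{derivative_Lh} I would also note that \eqref{eq:un1} is precisely $\frac{d}{dt}\dede{L^h}{\dot q}-\dede{L^h}{q}=0$, i.e.\ that $q(t)$ is critical for $\int_0^1 L^h\,dt$ among all variations $\delta q$ vanishing at the endpoints.

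Next I would carry out the variational comparison. Given an arbitrary variation $\delta\rho(t)$ of $\rho(t)$ vanishing at the endpoints, I would lift it to a variation $\delta q(t)$ of $q(t)$ with $T\pi\circ\delta q=\delta\rho$ and $\delta q$ vanishing at the endpoints; the horizontal lift $\delta q=\hor_q(\delta\rho)$ serves, since $\hor_q$ is linear in its argument and $\hor_q(0)=0$ supplies the endpoint conditions. Using the action identity and then \eqref{eq:un1},
\[
\delta\int_0^1 \ell^\Sigma(\rho,\dot\rho)\,dt=\delta\int_0^1 L^h(q,\dot q)\,dt=\int_0^1\left\langle \dede{L^h}{q}-\frac{d}{dt}\dede{L^h}{\dot q},\,\delta q\right\rangle dt=0.
\]
Since $\delta\rho$ is arbitrary, Hamilton's principle holds for $\ell^\Sigma$, which is equivalent to the Euler--Lagrange equations \eqref{eq:EL} for $\rho(t)$, as required.

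The step demanding the most care is the lifting of variations together with their endpoint data: one must confirm that every endpoint-fixed base variation $\delta\rho$ arises as $T\pi$ applied to an endpoint-fixed variation $\delta q$ of $q(t)$, and that differentiating under the integral sign is legitimate for the whole varied family rather than only the reference curve. This is exactly where the principal bundle structure is used: $\pi$ is a surjective submersion and $\hor_q$ furnishes an explicit right inverse to $T_q\pi$ on the horizontal subspace, producing both the lift and the endpoint conditions at once. Alongside this I would verify that the identity $\int_0^1 L^h\,dt=\int_0^1\ell^\Sigma\,dt$ holds for each curve of the varied family, so that the variation $\delta$ commutes with the passage to $\Sigma$; this is routine but is the hinge on which the whole comparison turns.
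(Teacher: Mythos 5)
Your argument is correct, and it reaches the conclusion by a genuinely different route from the paper's. The paper obtains Theorem \ref{thm:un} as a by-product of the Lagrangian reduction theorem with forces: it chooses the horizontal force via \eqref{eq:sigforce} precisely so that the curvature and coupling distortions on the right-hand side of the second Lagrange--Poincar\'e equation in \eqref{eq:LP12a} cancel, then computes the corresponding forced Euler--Lagrange equations on $Q$ and splits them into the $VQ^\circ$-component \eqref{eq:un1} and the $HQ^\circ$-component \eqref{eq:un3}; the projection property is then read off from the equivalence of the unreduced and reduced variational principles in Theorem \ref{LPRed}. You bypass that machinery entirely: you note that \eqref{eq:un1} is the Euler--Lagrange equation of the degenerate Lagrangian $L^h$, that $\int_0^1 L^h\,dt=\int_0^1\ell^\Sigma\,dt$ is $\pi$-basic because $L^h\circ P_h=L^h$ and $P_h(\dot q)=\hor_q(\dot\rho)$, and that every endpoint-fixed variation $\delta\rho$ lifts horizontally to an endpoint-fixed variation $\delta q$, so criticality upstairs forces criticality downstairs. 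Your proof is shorter, proves the slightly stronger statement that \eqref{eq:un1} alone suffices, and makes transparent the paper's remark that the projected dynamics are independent of $L^v$ and $F^v$. What the paper's longer route buys is the identification of \eqref{eq:un1}--\eqref{eq:un3} as a single forced Lagrangian system on $TQ$ with an explicitly determined horizontal force --- information your direct comparison does not produce, but which the paper needs for the Noether relation \eqref{Noether_with_forces}, the comparison with horizontal shooting in \S\ref{sec:geo}, and the modelling role of $F^v$. The two points you flag as delicate (lifting variations with their endpoint data, and commuting $\delta$ with the passage to $\Sigma$) are indeed the load-bearing steps, and the horizontal lift handles both exactly as you describe; I see no gap.
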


\paragraph{\bf Remark.} We recall here that equations \eqref{eq:un1} - \eqref{eq:un3} are the unreduced Euler-Lagrange equations on $TQ$ for a Lagrangian $L=L^v+L^h$ and a force field $F=F^v+F^h$, in the special case when $F^h$ is constructed from the given Lagrangian  $\ell^\Sigma $ by the formula \eqref{eq:sigforce}. Note that once a connection has been fixed, the choice of $\ell^\Sigma$ determines $L^h$ and vice versa.  However, the choice of $L^v$ and $F^v$ is left open.  The main content of un-reduction is that the projected dynamics on $\Sigma$ is independent of the choice of $L^v$ and $F^v$. Therefore, both these functions may be chosen arbitrarily, a freedom that constitutes a distinct modeling step for the particular application at hand.  In practice, vertical dynamics, that is equation \eqref{eq:un3}, may be chosen
 to give suitable numerical properties or to add additional functionality to the solution being developed.  An example of such a modeling procedure is given in \S\ref{PlaneCurves}.

Theorem \ref{thm:un} makes no statement about the existence of solutions for the equations \eqref{eq:un1} - \eqref{eq:un3}. Usually, the simplest way is to remark that the un-reduction equations are equivalent to Euler-Lagrange equations \eqref{EL_forces} for the Lagrangian $L = L^h + L^v$.

Also of interest is the converse of Theorem \ref{thm:un}. Given a curve $\rho(t)$ that solves the Euler-Lagrange equations for the Lagrangian $\ell^\Sigma$ on the base space, can we find a solution $q(t)$ of \eqref{eq:un1} - \eqref{eq:un3}, which projects down to $\rho(t)$? In the case of geodesic problems as in Example \ref{ex:geo} we know from differential geometry that this is always possible. Given a geodesic on $\Sigma$ and a Riemannian submersion $\pi:Q\to\Sigma$, we can lift it to a horizontal geodesic on $Q$. In the general case the possibility of lifting will depend on whether there exist solutions to \eqref{eq:un3}.

\subsection{Relationship with vanishing momentum maps and horizontal methods for geodesic problems}\label{sec:geo}


The horizontal methods for geodesic problems, developed and applied in \cite{MiMu07,CoHo2009}, may be recovered from the general un-reduction procedure developed here.  The horizontal approach takes the Euler-Lagrange equations on $Q$ with the geodesic Lagrangian and asserts, in addition, the condition
\begin{equation}
\mathbf{J}\lp \dede{L}{\dot q}\rp = 0.
\label{eq:horcond}
\end{equation}

To gain a clearer grasp of the horizontal approach for geodesic problems, consider first the more general problem of asserting \eqref{eq:horcond} for a general Lagrangian $L=L^h+L^v$. Since $\delta L^h/\delta \dot q \in VQ^\circ$, the relation $\mathbf{J}\left(\delta L^h/\delta \dot q\right)=0$ is always satisfied, therefore \eqref{eq:horcond} implies $\mathbf{J}\left(\delta L^v/\delta \dot q\right)=0$ and we have $\de \ell^{\operatorname{Ad}} / \de \bar \rho = \bar {\bf J}\lp \de L^v / \de \dot q\rp=0$. This reduces the Lagrange-Poincar\'e equations \eqref{eq:LP12a} with zero external forces to the following single equation,
\[
\nabla_t\dede{\ell^\Sigma}{\dot \rho} - \frac{\nabla \ell^\Sigma}{\delta \rho} = \frac{\nabla \ell^{\operatorname{ Ad}}}{\delta \rho}.
\]

At the un-reduced level, since we already know that $\delta L^v/\delta \dot q\in HQ^\circ$ by \eqref{properties_fiber_der}, the condition $\mathbf{J}\left(\delta L^v/\delta \dot q\right)=0$ implies that $\de L^v / \de \dot q = 0$. Consequently, the Euler-Lagrange equations in the vanishing momentum case take the form
\[
\tilde \nabla_t\dede{L^h}{\dot q} - \frac{\tilde\nabla L^h}{\de q} = \frac{ \tilde \nabla L^v}{\de q}.
\]

Therefore, the vanishing momentum map condition eliminates curvature distortion, although coupling distortion may persist.

Note that in general these equations do not coincide with the un-reduction equations with the vanishing momentum map condition imposed. Indeed, from \eqref{eq:un1} - \eqref{eq:un3} we obtain that these un-reduced equations read
\begin{eqnarray}\label{eq_vanmom1}
&&\tilde \nabla_t \dede{L^h}{\dot q} - \frac{\tilde\nabla L ^h }{\delta q } = 0\\
&&\mathbf{J}\lp \dede{L}{\dot q}\rp = 0,
\label{eq_vanmom2}
\end{eqnarray}
where the coupling distortion term $\tilde \nabla L^v/\de q$ has now been cancelled by the horizontal forcing.

Specializing to the particular case when $L$ is the geodesic Lagrangian and $\cA$ is the corresponding mechanical connection, the vanishing momentum map condition \eqref{eq:horcond} becomes a horizontality condition,
\[
\left( \frac{\delta  L}{\delta  \dot q}\right) ^\sharp = \dot q \in HQ.
\]
In addition, for the geodesic Lagrangian, $\tilde \nabla L^v / \de q = \tilde \nabla L / \de q = 0$ where $\tilde \nabla$ is the Levi-Civita connection.  Thus, in the case of geodesic motion with vanishing momentum, the un-reduction equations are equivalent to the horizontal geodesic equations themselves
\begin{eqnarray}
\tilde \nabla_t \dot q &=& 0 \label{eq:unGeo1}\\
{\bf J}\lp \dede{L}{\dot q}\rp &=& 0.\label{eq:unGeo2}
\end{eqnarray}
Whilst \eqref{eq:unGeo1} - \eqref{eq:unGeo2} may appear over-determined, the observation that \eqref{eq:unGeo1} is an Euler-Lagrange equation with symmetry allows one to interprete \eqref{eq:unGeo2} as Noether's Theorem.  Therefore \eqref{eq:unGeo1} - \eqref{eq:unGeo2} are consistent, and \eqref{eq:unGeo2} asserts a constraint on the initial conditions.  Namely, $\cA\lp \dot q_0\rp = 0$, that is the initial velocity must be horizontal.  Thus, un-reduction applied to the geodesic problem together with the vanishing momentum map condition recovers to the horizontal shooting method for geodesic problems.

This coincidence occurs because, in addition to the property of vanishing momentum to eliminate curvature distortion, the particular choice of the geodesic Lagrangian does not introduce any coupling distortion.  Therefore, the forcing (both horizontal and vertical) are equal to zero, and the un-reduction procedure drops to reconstruction of a symmetry reduced Lagrangian system.  That is, the un-reduction procedure tells us to seek solutions of the original Euler-Lagrange equations (i.e. the geodesic equation) that satisfy the vanishing momentum map condition.

As we have seen above, however, for a general Lagrangian the vanishing momentum map condition is not admissible without modification since it does not allow us to obtain the Euler-Lagrange equations for $\ell ^\Sigma $ on $ \Sigma $. In this case, one needs to use the un-reduction approach developed here as opposed to simply asserting the vanishing momentum condition.

\rem{
\comment{F: I don't see why  \eqref{eq:horcond} implies that $\de{L^v}/{\de q}$ vanishes. It is true that it vanishes, but because of the fact that the Lagrangian is given by the kinetic energy, and not because of \eqref{eq:horcond}.\\
In fact I suggest to present this \S 5.1 in two steps. First for a general Lagrangian, and then for geodesics.

- For general Lagrangian. By conservation of the momentum map, it is consistent to assume the condition $\mathbf{J}\lp \dede{L}{\dot q}\rp = 0$. In view of \eqref{properties_fiber_der} this can be equivalently written as
\[
\mathbf{J}\lp \dede{L^v }{\dot q}\rp = 0\quad\text{or}\quad \dede{L}{\dot q}\in VQ^\circ\quad\text{or}\quad \dede{L^v}{\dot q}=0.
\]
When this condition is assumed, the unreduced Euler-Lagrange equations (with zero external forces) are
\begin{eqnarray}\label{eq_hor1}
&&\tilde \nabla_t \dede{L^h}{\dot q} - \frac{\tilde\nabla L ^h }{\delta q } -\frac{\tilde\nabla L ^v }{\delta q }  = 0\\
&&\mathbf{J}\lp \dede{L}{\dot q}\rp = 0.
\label{eq_hor2}
\end{eqnarray}
In view of \eqref{eq:un1} - \eqref{eq:un3}, it is consistent that there is still a term involving $ \frac{\tilde \nabla  L^v }{\delta q }$ in the equation, since here the external force is zero, and in \eqref{eq:un1} - \eqref{eq:un3} it is the force term that makes the term $\frac{\tilde\nabla L^v  }{\delta q }$ disappear. Note that if the vanishing momentum map condition is assumed, the Lagrange-Poincar\'e equations become
\[
\nabla_t\dede{\ell^\Sigma}{\dot \rho} - \frac{\nabla \ell^\Sigma}{\delta \rho} = \frac{\nabla \ell^{\operatorname{ Ad}}}{\delta \rho},
\]
since $\frac{\delta  \ell^{\operatorname{Ad}}}{\delta  \bar{\rho}} =\bar{ \mathbf{J} }\left(  \frac{\delta  L^v}{\delta \dot q }\right) =0$. Therefore, the vanishing momentum map condition implies that the curvature distortion vanishes, however, in general the coupling distortion term is still there and one does not obtain the Euler-Lagrange equations for $\ell^\Sigma $ on $ \Sigma $.

In the particular case when the connection is the mechanical connection associated to a given Riemannian metric, then the vanishing momentum map condition reduces to the horizontality condition
\[
\left( \frac{\delta  L}{\delta  \dot q}\right) ^\sharp\in HQ 
\] 

- For geodesics, of course the unreduced Euler-Lagrange equations are $
\tilde \nabla _t \dot q=0$ and the horizontality condition is $\dot q\in HQ$. In order to compare these equations with \eqref{eq:un1} - \eqref{eq:un3}, we observe that the geodesic equations can be rewritten as
\begin{eqnarray}\label{eq_horgeo1}
\frac{d}{dt}  \dede{L^h}{\dot q} - \dede{L^h}{q}&=&0\\
\mathbf{J}\lp \dede{L}{\dot q}\rp &=&0,
\label{eq_horgeo2}
\end{eqnarray}
using \eqref{eq_hor1} - \eqref{eq_hor2} and $ \frac{\tilde \nabla L  }{\delta  q}=0$.
Equations \eqref{eq_horgeo1} - \eqref{eq_horgeo2} correspond precisely with \eqref{eq:un1} - \eqref{eq:un3} under the choices $F^v = 0$ and horizontal initial conditions.  Therefore the horizontal method for geodesic problems is a particular case of the present un-reduction approach.

However, as we have seen above, for a general Lagrangian the vanishing momentum map condition is not adapted since it does not allows to obtain the Euler-Lagrange equations for $\ell ^\Sigma $ on $ \Sigma $. In this case, one needs to use the un-reduction approach.
}
}

\section{Closed, simple plane curve matching}\label{PlaneCurves}

This section develops equations for the problem of matching closed, simple, planar curves using the un-reduction techniques developed in the paper.  The matching problem for such curves has been treated in, for example, \cite{Co2008,CoHo2009,MiMu07,MiMu2006}.  The formulation of the problem differs between authors, however, the methods developed achieve consensus in agreeing that the path followed by the matching algorithms should be geodesics.  In this section, we present an alternative perspective.  Since any observer of the curve dynamics is only sensible of the geometric information, or `shape', of a curve, we propose that matching requires only that shape need follow a geodesic.  Meanwhile, the dynamics of the whole curve, which includes information about both parametrization and shape, may evolve in a non-geodesic fashion.

This broader notion of curve matching allows the introduction of an entire family of matching dynamics, one of whom is the original geodesic dynamics.  This new family of matching dynamics, which is described efficiently by the un-reduction technique outlined in this paper, incorporates enough flexibility to implicitly overcome other difficulties.  Here, for example, we use the extra flexibility brought by un-reduction matching to address the unwieldy parametrization dynamics displayed by geodesic matching, as in Figure $1$ in \cite{CoHo2009}, by giving dynamics that remain uniformly parametrized throughout the matching procedure.  This achieves the objective set out in \S\ref{sec:Intro} to derive a set of matching dynamics that addressed the parametrization problems faced by curve matching algorithms in an intrinsic, dynamic way along the entire trajectory. 

Additionally, we present a member of the un-reduction family whose dynamics respect the property of curves being written as a graph in polar coordinates, wherever it is sensible to do so.  In higher dimensional problems, such as that studied in \cite{BaHaMi2010}, the graph preservation property may significantly simplify numerical implementation of the dynamics.  These two examples of non-geodesic matching dynamics will, hopefully, stimulate the description of matching dynamics with other additional properties.

\subsection{Geometric setup}

Following \cite{MiMu2006}, consider $Q=\Emb^+\lp S^1, \mR^2\rp$, the smooth manifold of all positively oriented embeddings from $S^1$ to $ \mathbb{R}^2$.  $Q$ may be thought of as the space of simple, closed, planar curves.  An element $c \in Q$ contains information about a shape, namely a submanifold of $\mR^2$ of $S^1$-type, and a parametrization of the shape.   The space of shapes may therefore be identified with the quotient manifold $\Sigma = Q/G$, where $G=\operatorname{Diff}^+(S^1)$ is the group of orientation preserving diffeomorphisms of $S^1$ which acts freely and properly from the right on $Q$ by composition. We refer to \cite{KrMi1997} for a detailed discussion of the manifold structure of $\Emb\left(  S^1, \mathbb{R}  ^2  \right) $ and $ \Sigma $, and to \cite{MiMu2006} for a discussion on the geometry of spaces of immersions and their quotients.

In particular, in an appropriate topology, $\Emb^+\left(  S^1, \mathbb{R}  ^2  \right)$ is an open subset of $ C^\infty (S^1, \mathbb{R}  ^2 )$. The quotient manifold, however, is not feasible for numerical implementation since there are no natural coordinates and analytical considerations can only be achieved via the use of equivalence classes.  Therefore, it is natural to resort to an un-reduction procedure that takes advantage of the principal $\Diff^+\lp S^1\rp$-bundle structure of $Q$ over $\Sigma$.

Recall that an embedding $c:S^1 \rightarrow \mathbb{R} ^2 $ is a smooth injective immersion that is a homeomorphism onto its image. Thus, $Q$ may be identified with smooth, periodic maps $C^\infty_{1}\lp \mR;\mR^2\rp$, with, say, period $1$, that are injective on each interval $[a,1+a)$ for all $a \in \mR$, and satisfy $|c_\theta|\lp \theta\rp \neq 0$ for all $\theta \in \mR$. Here the subscript denotes differentiation.

Similarly, an element $f \in G:=\Diff^+(S^1)$ may be identified with a smooth, strictly monotonic map $f: \mR \to \mR$ such that $f\lp \theta + 1 \rp = f\lp \theta\rp + 1$ for all $\theta \in \mR$. 

Since $Q$ is an open subset of $ C^\infty (S^1, \mathbb{R}  ^2 )$, a tangent vector $U_c \in T_cQ$ is represented by a pair of maps $(c,U)$, where $U\in C^\infty _{1}(\mR, \mathbb{R}  ^2 )$. The Lie algebra of $G$ is given by the space $\mg:=\mX\lp S^1\rp$ of vector fields on $S^1$, which may also be represented by real valued, periodic functions $C^\infty_1\lp \mR, \mathbb{R}  \rp$. The infinitesimal generator associated to $u\in \mg$ reads $u_ Q\lp c\rp  = \lp c, u c_\theta\rp$.

The following are some geometric quantities which are convenient to use due to their behavior under the action of $G$.
\begin{itemize}
	\item Derivative along the curve
	\[
	D_\theta = \frac{1}{|c_\theta|}\prt_\theta
	\]
	\item Length of the curve
	\[
	l(c)= \int^1_0|c_\theta| \, d\theta
	\]
	\item Unit tangent vector
	\[
	\tau (c)= \frac{c_\theta}{|c_\theta|}, \textrm{ or } \tau (c) = D_\theta c
	\]
	\item Unit normal vector
	\[
	n(c) = J\tau(c), \mbox{ where } J =
	\begin{pmatrix}
	0 & -1\\
	1 & 0\\
  \end{pmatrix}
	\]
	\item Curvature
	\[
	\kappa(c) = \lp D_\theta\tau(c)\rp\cdot n(c)
	\]
	\item Volume measure
	\[
	\operatorname{vol}(c) = |c_\theta|\, d\theta
	\]
\end{itemize}

A cotangent vector in $T^*_cQ$ is represented by a pair of maps $(c,P\otimes\Omega)$, where $P\otimes\Omega\in C_1^\infty(\mathbb{R}, (\mathbb{R}^2)^*) \otimes \Omega^1(S^1)$. The dual space to the Lie algebra $\mg = \mathfrak{X} (S^1)$ is identified with $ \mg^* = \mathfrak{X} (S^1)^* = C_1^\infty(\mathbb{R}, \mathbb{R}) \otimes \Omega^1(S^1)$. This is seen by introducing the following pairings
\[
\left\langle (c,P\otimes \Omega), (c,U) \right\rangle := \int_{S^1} P( U) \, \Omega,\quad\text{and}\quad \left\langle \mu\otimes\omega,u \right\rangle =\int_{ S^1 } \mu (u) \, \omega,
\]
where $P\otimes \Omega \in T^*_cQ$, $(c,U) \in T_cQ$, $\mu \otimes \om \in \mg^*$, and $u \in \mg$.

The cotangent lift momentum map, $\mathbf{J}:T^*Q \to \mg^*$, that arises with this setup is calculated to be
\[
\mathbf{J}\lp c,P\otimes\Omega \rp = (P\cdot  c_\theta)\otimes \Omega.
\]

Fixing $\Om = \operatorname{vol}(c)$ to be the volume measure on $c$, the momentum map becomes
\begin{equation}
\mathbf{J}\lp c,P\otimes \operatorname{vol}(c) \rp = |c_\theta|(P\cdot  \tau(c))\otimes \operatorname{vol}(c)\in\mathfrak{g}^*.
\label{eq:mommap}
\end{equation}

The simplest metric to consider would be the $L^2$ metric. However, as pointed out in \cite{MiMu2006}, this metric has arbitrarily small geodesic distance between any two curves.  The simplest metric that is of practical importance is therefore the $L^2$ metric weighted by curvature.  That is,
\begin{equation}
\mathcal{G} _c \left( U,V \right) = \int^{1}_0 \lp1+ \kappa^2(c)\rp \lp U\cdot V\rp\, \operatorname{vol}(c).
\label{eq:Metric}
\end{equation}
For a detailed discussion of the properties of this and other metrics such as Sobolev metrics, see \cite{MiMu07,MiMu2006}.
The mechanical connection corresponding to $\cG$ is given by
\begin{equation}
\mathcal{A} _{ \operatorname{mech}}\lp(c,U)\rp = \frac{U\cdot \tau(c)}{|c_\theta|}.
\label{eq:connection}
\end{equation}
 
Indeed, since 
\[\mathcal{G} _c(u_Q(c), v_Q(c))
=
\int_0^{1} \lp 1+\kappa^2(c)\rp u v |c_\theta|^2\, \operatorname{vol}(c)
,
\]
the locked inertia tensor is $ \mathbb{I}(c)u= \lp 1+\kappa^2(c)\rp |c_\theta|^2 u \otimes \operatorname{vol}(c)$.   Therefore, the identity 
\[
\left\langle \mathbb{I}(c) \mathcal{A} _{ \operatorname{mech}}(c,U), u \right\rangle = \mathcal{G} _c\left( (c,U), (c,u c_\theta)\right)
\]
implies formula \eqref{eq:connection}. The projections associated with $\cA_{\operatorname{mech}}$ are
\[
P^v(c,U)=(c,(U\cdot \tau(c)) \tau(c)) \quad\text{and} \quad P^h(c, U)=(c,U- (U\cdot \tau(c)) \tau(c)) = (c,(U\cdot n(c))n(c)).
\]
Thus, the horizontal-vertical split of a vector field along $c$ is just its decomposition into normal and tangent components.

Let $L$ be the geodesic Lagrangian corresponding to (\ref{eq:Metric}),
\begin{equation}\label{lagrangian_kappa}
L(c, c_t) = \frac{1}{2} \|c_t\|^2_Q= \frac{1}{2}\int_0^{ 1} \lp 1+ \kappa^2(c) \rp | c_t|^2\, \operatorname{vol}(c).
\end{equation}
The horizontal part of $L$ is
\begin{equation}\label{eq:horLagrangian}
L^h (c,c_t)= \frac{1}{2}\int^1_0 \lp 1 + \kappa^2(c)\rp h(c,c_t)^2 \, \operatorname{vol}(c)
\end{equation}
where the notation $h(c,c_t) :=c_t\cdot n(c)$, so that $P^h(c,c_t) = h(c,c_t)n(c)$.
   
\subsection{Derivation of un-reduction equations}   

In order to derive the un-reduction equations \eqref{eq:un1} and \eqref{eq:AltVert} for the planar curve matching problem one must write down the Euler-Lagrange equations for the Lagrangian $L^h$. This is best accomplished by using a variational principle.  For the variational approach, one must first calculate the variations that will be used in the variational principle before calculating the equations.  This calculation is achieved by the following lemma, after which un-reduction equations are derived.

\begin{lem}\label{lem:var}
The variations $\de h$, $\de\operatorname{vol}(c)$ and $\de \kappa$ are
\begin{eqnarray*}
\de h &=& \lp \prt_t - sD_\theta\rp \lp \de c \cdot n\rp + D_\theta h \lp \de c \cdot \tau\rp\\
\de\operatorname{vol}(c) &=& \lp D_\theta\lp \de c\cdot \tau\rp - \lp \de c \cdot n\rp\kappa \rp\, \operatorname{vol}(c),\\
\de \kappa &=& \lp D^2_\theta  + \kappa^2\rp \lp \de c \cdot n\rp + D_\theta \kappa \lp \de c \cdot \tau\rp.
\end{eqnarray*}
where for simplicity we write $h$, $n$, $\tau$, $\kappa$ instead of $h(c,c_t)$, $n(c)$, $\tau(c)$, $\kappa(c)$.
\end{lem}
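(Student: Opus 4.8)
The plan is to compute the three variations directly from their geometric definitions, treating $c$ as a map in $C^\infty_1(\mathbb{R},\mathbb{R}^2)$ and commuting the spatial operator $\partial_\theta$ with the variational operator $\delta$. The basic building blocks I would establish first are the variations of the elementary quantities $|c_\theta|$, $\tau$, and $n$. Writing $\delta c$ for an arbitrary variation and decomposing it into tangential and normal parts via $\delta c = (\delta c\cdot\tau)\tau + (\delta c\cdot n)n$, one finds $\delta|c_\theta| = c_\theta\cdot\delta c_\theta/|c_\theta|$, and since $\delta c_\theta = \partial_\theta(\delta c)$ this becomes $|c_\theta|\,\tau\cdot D_\theta(\delta c)$. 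Expanding $D_\theta(\delta c)$ using the Frenet-type relations $D_\theta\tau = \kappa n$ and $D_\theta n = -\kappa\tau$ (which follow from $\kappa = (D_\theta\tau)\cdot n$ and $|\tau|=1$) yields the expression for $\delta\operatorname{vol}(c)$ after accounting for $\operatorname{vol}(c)=|c_\theta|\,d\theta$.

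\textbf{The curvature and height variations.}
For $\delta\kappa$ I would use $\kappa = (D_\theta\tau)\cdot n$ together with the already-computed $\delta\tau$, $\delta n$, and the key fact that $\delta$ does not commute with $D_\theta$: since $D_\theta = |c_\theta|^{-1}\partial_\theta$, one has the commutator $[\delta, D_\theta] = -(\delta|c_\theta|/|c_\theta|)D_\theta$. This commutator is the crux of the whole lemma, because it is what produces the second-order operator $D_\theta^2$ and the $\kappa^2$ term in $\delta\kappa$, as well as the $sD_\theta$ and $D_\theta h$ terms in $\delta h$. For $\delta h$, starting from $h = c_t\cdot n$, I would compute $\delta(c_t) = \partial_t(\delta c)$ and $\delta n$ separately, then combine; here $s$ presumably denotes the tangential velocity $c_t\cdot\tau$ (the notation should be checked against its definition earlier in the section), and the term $(\partial_t - sD_\theta)(\delta c\cdot n)$ arises from the interplay between the time derivative in $c_t$ and the normal projection.

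\textbf{Organizing the bookkeeping.}
Rather than expanding everything into a single unwieldy computation, I would compute $\delta\tau = (D_\theta(\delta c\cdot n) + \kappa(\delta c\cdot\tau))n$ and $\delta n = -(D_\theta(\delta c\cdot n) + \kappa(\delta c\cdot\tau))\tau$ as a reusable intermediate step, since the combination $\Phi := D_\theta(\delta c\cdot n) + \kappa(\delta c\cdot\tau)$ recurs throughout. Feeding $\Phi$ into $\delta\kappa = \delta((D_\theta\tau)\cdot n)$ and applying the commutator once more to $D_\theta\tau$ should collapse to $(D_\theta^2 + \kappa^2)(\delta c\cdot n) + D_\theta\kappa\,(\delta c\cdot\tau)$ after simplification.

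\textbf{Main obstacle.}
The hard part will be the careful handling of the $[\delta, D_\theta]$ commutator and ensuring no curvature terms are dropped when repeatedly differentiating along the curve; the appearance of $D_\theta^2$ in $\delta\kappa$ means one must apply the commutator at second order, where sign errors and missing $\kappa^2$ contributions are easy to make. A secondary subtlety is verifying that all three formulas are expressed cleanly in the intrinsic basis $\{\tau,n\}$ with coefficients built only from $\delta c\cdot\tau$, $\delta c\cdot n$, and their $D_\theta$-derivatives, so that the results are manifestly reparametrization-covariant.
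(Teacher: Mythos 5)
Your proposal follows essentially the same route as the paper's proof: establish the commutator $[\de, D_\theta] = -\bigl(\de|c_\theta|/|c_\theta|\bigr)D_\theta = \bigl(\kappa\,(\de c\cdot n) - D_\theta(\de c\cdot\tau)\bigr)D_\theta$, use it to obtain $\de\tau$ and $\de n$ in the $\{\tau,n\}$ frame (your $\Phi$ is exactly the coefficient appearing in the paper's computation), and then assemble $\de h$, $\de\operatorname{vol}(c)$ and $\de\kappa$ by direct differentiation of their definitions, applying the commutator a second time for $\de\kappa$. The ingredients, the order of the computation, and the identified difficulties all match the paper's argument, so the proposal is correct and not a genuinely different approach.
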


\begin{proof}
It is convenient to calculate the commutator of differential operators, $[\de,D_\theta]$, before continuing.  One has
\begin{eqnarray}
[\de, D_\theta] &=& \de \frac{\prt_\theta}{|c_\theta|} - \frac{\prt_\theta}{|c_\theta|} \de \nonumber\\
								&=& -\frac{c_\theta \cdot \de c_\theta}{|c_\theta|^3}\prt_\theta \nonumber\\
								&=& -\lp \tau \cdot D_\theta \de c\rp D_\theta\nonumber\\
								&=& \lp \lp n\cdot \de c\rp \kappa- D_\theta\lp \tau\cdot \de c\rp\rp D_\theta,\label{eq:Commutator}
\end{eqnarray}
where we used $D_\theta\tau=\kappa n$.
Employing \eqref{eq:Commutator}, one may precalculate the variation of the tangent field, $\tau$,
\begin{eqnarray}
\de \tau 	&=& \de D_\theta c\nonumber\\
 					&=& D_\theta \de c  + [\de, D_\theta] c\nonumber\\
 					&=& D_\theta \lp \lp \tau \cdot \de c\rp \tau + \lp n \cdot \de c\rp n\rp + \lp \lp n \cdot \de c\rp \kappa- D_\theta\lp \tau \cdot \de c\rp \rp\tau\nonumber\\
 					&=& \lp \lp \tau\cdot \de c\rp\kappa + D_\theta\lp n\cdot \de c\rp \rp n.
\label{eq:varTang}
\end{eqnarray}
where we used $D_\theta n=-\kappa \tau$. Similarly, for the time derivative we have
\[
\partial_t\tau=\lp \lp \tau\cdot  c_t\rp\kappa + D_\theta\lp n\cdot c_t\rp \rp n=\lp \lp \tau\cdot  c_t\rp\kappa + D_\theta h \rp n.
\]
Multiplying these variations by $J$ we obtain for $n$
\[
\de n = J\de \tau = - \lp D_\theta\lp \de c \cdot n\rp + \lp \de c \cdot \tau\rp\kappa \rp \tau\quad\text{and}\quad \partial_t n = J\partial_t \tau = - \lp D_\theta h + s\kappa \rp \tau,
\]
where we used the notation $s:=c_t\cdot\tau$.

Proceeding with the expression for $\de h$, one has
\begin{eqnarray*}\label{eq:hvar}
\de h &=&\de \lp c_t\cdot n\rp = \de c_t \cdot n + c_t \cdot \de n\\
&=&\prt_t\lp \de c \cdot n \rp - \de c \cdot \partial_tn- \lp D_\theta\lp \de c \cdot n\rp + \lp \de c \cdot \tau\rp\kappa \rp s\\
&=&\lp \prt_t - sD_\theta\rp \lp \de c \cdot n\rp + D_\theta h\lp \de c \cdot \tau\rp
\end{eqnarray*}
as required.

The expression for $\de \operatorname{vol}(c)$ follows since
\begin{eqnarray*}
\de \operatorname{vol}(c) &=& \frac{\de c_\theta \cdot c_\theta}{|c_\theta|}\, d\theta\\
&=& \lp D_\theta \de c\cdot \tau\rp\, \operatorname{vol}(c)\\
&=& \lp D_\theta\lp \de c\cdot \tau \rp - \lp \de c \cdot n\rp\kappa\rp \, \operatorname{vol}(c).
\end{eqnarray*}
Finally, upon noting that $\de \tau \propto D_\theta \tau \propto n$, and similarly reversing $\tau$ and $n$, the derivation of the expression for $\de \kappa$ reads
\begin{eqnarray*}
\de \kappa 	&=& \de\lp D_\theta \tau \cdot n\rp\\
						&=& D_\theta \de \tau \cdot n + [\de, D_\theta]\tau \cdot n + D_\theta \tau \cdot \de n\\
						&=& D_\theta\lp \de \tau \cdot n\rp + [\de, D_\theta]\tau \cdot n\\
						&=& D_\theta\lp\lp \tau\cdot \de c\rp \kappa + D_\theta\lp n\cdot \de c\rp\rp + \kappa\lp \lp n\cdot \de c\rp\kappa - D_\theta\lp \tau\cdot \de c\rp\rp\\
						&=& \lp D^2_\theta + \kappa^2\rp\lp \de c \cdot n\rp +  D_\theta\kappa \lp \de c \cdot \tau\rp,
\end{eqnarray*}
as required. 
\end{proof}

\begin{prop}\label{prop:unvar}
The un-reduction equations for simple, planar, closed curves with the geodesic Lagrangian associated with $\cG$ read
\begin{eqnarray*}
h_t &=& D_\theta \lp s h\rp - \frac{\kappa\lp 1+ 3\kappa^2\rp}{2\lp 1 + \kappa^2\rp}h^2 + \frac{1}{\lp 1+\kappa^2\rp}\lp D^2_\theta\lp \kappa h^2\rp - 2 \kappa h D^2_\theta h\rp\\
\\
s_t &=& f\lp c, s, h\rp\\
\\
c_t &=& hn + s\tau.
\end{eqnarray*}
where $f$ is an arbitrary smooth, $G$-invariant map,
\[
f: \Emb\lp S^1, \mR^2\rp \times C^\infty_1\lp \mR\rp \times C^\infty_1\lp \mR\rp \to C^\infty_1\lp \mR\rp.
\]
\end{prop}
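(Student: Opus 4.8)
The plan is to read off the three equations from the two un-reduction equations \eqref{eq:un1}, \eqref{eq:AltVert} together with the orthonormal splitting of the velocity. The third equation is immediate: $\{\tau(c),n(c)\}$ is an orthonormal frame along $c$ and, by definition, $h=c_t\cdot n$, $s=c_t\cdot\tau$, so $c_t=(c_t\cdot n)n+(c_t\cdot\tau)\tau=hn+s\tau$. For the tangential equation I would use \eqref{eq:AltVert}. From $P^v(c,c_t)=(c,s\tau)$ one gets $L^v=\tfrac12\int_0^1(1+\kappa^2)s^2\operatorname{vol}(c)$, hence $\delta L^v/\delta c_t=(1+\kappa^2)s\,\tau\otimes\operatorname{vol}(c)$ and, by \eqref{eq:mommap}, $\mathbf{J}(\delta L^v/\delta c_t)=(1+\kappa^2)|c_\theta|\,s\otimes\operatorname{vol}(c)$. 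Since $\mathbf{J}$ kills $VQ^\circ$ and $T^*Q=VQ^\circ\oplus HQ^\circ$, its restriction to $HQ^\circ$ is onto $\mg^*$, so $\mathbf{J}(F^v)$ is an arbitrary element of $\mg^*$; expanding $\frac{d}{dt}$ on the left and rewriting $\partial_t\kappa,\partial_t|c_\theta|$ through $c_t=hn+s\tau$ turns the equation into one solvable for $s_t$ as an arbitrary $G$-invariant function of $c,s,h$, which is precisely the claim $s_t=f(c,s,h)$.

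The substantive part is the normal equation for $h_t$, that is, the Euler-Lagrange equation \eqref{eq:un1} for the horizontal Lagrangian $L^h$ of \eqref{eq:horLagrangian}. I would apply the variational principle $\delta\int_0^1 L^h\,dt=0$ directly, varying the integrand $\tfrac12(1+\kappa^2)h^2\operatorname{vol}(c)$ and inserting $\delta\kappa$, $\delta h$, $\delta\operatorname{vol}(c)$ from Lemma \ref{lem:var}. With $a:=\delta c\cdot n$ and $b:=\delta c\cdot\tau$ the aim is to integrate by parts until $\delta\int L^h\,dt=\int_0^1\!\int_0^1(C_a\,a+C_b\,b)\operatorname{vol}(c)\,dt$, after which $C_a=0$ is the desired equation. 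The two tools are the anti-self-adjointness $\int(D_\theta u)v\,\operatorname{vol}(c)=-\int u\,(D_\theta v)\operatorname{vol}(c)$ for the $\theta$-derivatives, and integration by parts in $t$ for the $\partial_t a$ produced by $\delta h$; the latter must account for the $t$-dependence of $\operatorname{vol}(c)$ and of $\kappa$, using $\partial_t\kappa=(D_\theta^2+\kappa^2)h+sD_\theta\kappa$ and $\partial_t\operatorname{vol}(c)/\operatorname{vol}(c)=D_\theta s-h\kappa$ (both read off from Lemma \ref{lem:var} with $\delta\mapsto\partial_t$, $\delta c\mapsto c_t$). Collecting the curvature-weighted terms then produces the stated right-hand side.

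The main obstacle is precisely this bookkeeping. Because $\delta h$ carries a time derivative, the $t$-integration by parts couples to both the evolving volume and to $\kappa_t$, and it is the exact treatment of these that fixes whether the $s$-coupling appears as $sD_\theta h$ or $D_\theta(sh)$ and what the coefficient of $\kappa h^2$ comes out to be; the higher-order pieces $D_\theta^2(\kappa h^2)$ and $\kappa h\,D_\theta^2 h$ are more robust but still demand two clean integrations by parts. A reliable internal check, which I would run to trap sign and weight errors, is that the tangential coefficient $C_b$ must vanish identically: this is the concrete manifestation of the identity $\mathbf{J}(\delta L^h/\delta c_t)=0$ noted just after \eqref{eq:un3}, reflecting that \eqref{eq:un1} is $VQ^\circ$-valued, i.e.\ purely normal, so that only the coefficient of $a$ can carry dynamical content.
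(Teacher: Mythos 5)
Your proposal is correct and follows essentially the same route as the paper: the $c_t$ equation by the horizontal--vertical splitting, the $s_t$ equation by applying \eqref{eq:AltVert} to $\mathbf{J}\lp\de L^v/\de c_t\rp=(1+\kappa^2)|c_\theta|\,s\otimes\operatorname{vol}(c)$ with $\mathbf{J}(F^v)$ arbitrary, and the $h_t$ equation by varying $L^h$ with Lemma \ref{lem:var} and discarding the $(\de c\cdot\tau)$ terms, which vanish by degeneracy of $L^h$. The paper likewise passes through an intermediate form of the normal equation and then substitutes $\lp\prt_t-sD_\theta\rp\kappa=\lp D_\theta^2+\kappa^2\rp h$, exactly the relation you identify as the crux of the bookkeeping.
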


\begin{proof}

The first step is to calculate the variational principle for Lagrangian \eqref{eq:horLagrangian} using the variations derived in Lemma \ref{lem:var}.  The variational principle may be calculated as
\begin{eqnarray*}
0 &=& \de \int^1_0 \int^1_0 \frac{1}{2}\lp 1 + \kappa^2\rp h^2 \, \operatorname{vol}(c)\, dt\\
	&=& \int^1_0 \int^1_0 \lp 1 + \kappa^2\rp h\de h \operatorname{vol}(c)\, dt + \int^1_0 \int^1_0\frac{1}{2}\lp 1 + \kappa^2\rp h^2 \, \de\operatorname{vol}(c)\, dt\\
	&& \quad + \int^1_0 \int^1_0 \kappa h^2 \de\kappa \, \operatorname{vol}(c)\, dt.
\end{eqnarray*}
After substituting the variations from Lemma \ref{lem:var} and integrating by parts, the terms proportional to $\lp \de c \cdot \tau\rp$ vanish as expected since the Lagrangian $L^h$ is known to be degenerate.  The remaining terms, those proportional to $\lp \de c \cdot n\rp$, yield the equation
\begin{equation}
h_t = D_\theta\lp s h\rp - \frac{\kappa h^2}{2} + \frac{1}{1+\kappa^2} \lp \lp D^2_\theta + \kappa^2\rp\lp \kappa h^2\rp - h\lp \prt_t - sD_\theta\rp \lp 1+ \kappa^2\rp\rp.
\label{eq:hun-red}
\end{equation}
Now, replacing $\de$ with $\prt_t$ in the expression for $\de \kappa$ from Lemma \ref{lem:var} yields the following result
\[
\lp \prt_t - sD_\theta\rp \kappa = \lp D^2_\theta + \kappa^2\rp h.
\]
Employing this relation on \eqref{eq:hun-red} one obtains
\[
h_t = D_\theta\lp s h\rp - \frac{\kappa h^2}{2} + \frac{1}{1+\kappa^2} \lp \lp D^2_\theta + \kappa^2\rp\lp \kappa h^2\rp - 2\kappa h\lp D^2_\theta + \kappa^2\rp h\rp.
\]
Rearranging terms yields the desired result.

The equation for $c$ is simply a decomposition of $c_t$ into horizontal and vertical parts.  That is,
\begin{equation}
c_t = P^h(c_t) + P^v(c_t) = hn + s\tau.
\label{eq:Decomp}
\end{equation}

Next, for the equation for $s$, note that \eqref{eq:mommap} implies the relation
\begin{equation}\label{momap_geodesic}
\mu = {\bf J}\lp \dede{L}{c_t}\rp = |c_\theta|\lp 1+\kappa^2\rp s\otimes \operatorname{vol}(c)=\mathbb{I}(c)s
\end{equation}
so equation 
\eqref{eq:AltVert} reads
$\frac{d}{dt} \left(\mathbb{I}(c)s\right) = {\bf J}\lp F^v\lp c, c_t\rp\rp.$ Computing the time derivative on the left hand side and rearranging the terms yields the relation $s_t=\mathbb{I}(c)^{-1}\left({\bf J}\lp F^v\lp c, c_t\rp\rp-\frac{d}{dt}(\mathbb{I}(c)) s\right)$. Substituting in equation \eqref{eq:Decomp} on the right hand side yields an equation for $s$ that may be expressed as
\[
s_t = f\lp c, s, h\rp,
\]
for an arbitrary smooth, $G$-invariant map $f$.  Of course, there is a relationship between $f$ and ${\bf J}\lp F^v\lp c, c_t\rp\rp$, however, this relationship is not necessary for the present purposes, so it has been omitted.

\end{proof}

The un-reduction equations for curve matching derived in Proposition \ref{prop:unvar} represent a family of dynamical equations parametrized by an exogenous choice of smooth function $f:\mR^2\times \mR^2 \to \mR$.  For any initial conditions $\lp c_0, \dot c_0\rp \in TQ$ and any choice of exogenous function $f$, the projected dynamics in $Q/\Sigma$, obtained by taking the image of a curve $c$, are guaranteed to follow geodesics under the metric induced on shape space by the curvature-weighted $L^2$ metric on $Q$.  Consequently, the choice of parametrization of the boundary data in matching algorithms, and the choice of forcing may be freely chosen for convenience of the user, to pose an optimal control problem, or to introduce extra modeling into the dynamics without needing to re-derive the equations.  We present a couple of uses of this extra flexibility in the remainder of this section.  It should be noted, however, that the examples we give only scratch the surface of possible uses of the extra functionality brought by un-reduction.

\subsection{Particular choices of vertical forcing and initial conditions}

\subsubsection{Horizontal geodesics}
The horizontal geodesics case has already been described in some detail in \S\ref{sec:geo}. Recall that $c$ is horizontal if $s=0$ (consistently with the vanishing momentum map condition \eqref{momap_geodesic}). This case corresponds to the choice $f = 0$ together with the initial condition $s_0 = 0$, and recovers the horizontal shooting method of \cite{CoHo2009,MiMu07}.  The equations for this case are given by
\begin{eqnarray}\label{eq:hshoot1}
h_t &=& - \frac{\kappa\lp 1+ 3\kappa^2\rp}{2\lp 1 + \kappa^2\rp}h^2 + \frac{1}{\lp 1+\kappa^2\rp}\lp D^2_\theta\lp \kappa h^2\rp - 2 \kappa h D^2_\theta h\rp\\
\nonumber \\
c_t &=& hn.\label{eq:hshoot2}
\end{eqnarray}

Note that by choosing $f=0$, but allowing a free choice of the initial condition $s_0$, one recovers the method introduced in \cite{CoHo2009} to address parametrization concerns.  These equations read
\begin{eqnarray}\label{eq:CHhshoot1}
h_t &=& D_\theta\lp s_0 h\rp - \frac{\kappa\lp 1+ 3\kappa^2\rp}{2\lp 1 + \kappa^2\rp}h^2 + \frac{1}{\lp 1+\kappa^2\rp}\lp D^2_\theta\lp \kappa h^2\rp - 2 \kappa h D^2_\theta h\rp\\
\nonumber \\
c_t &=& hn + s_0\tau.\label{eq:CHhshoot2}
\end{eqnarray}
Equations \eqref{eq:CHhshoot1} - \eqref{eq:CHhshoot2} incorporate an extra term $s_0$ compared with \eqref{eq:hshoot1} - \eqref{eq:hshoot2} which allow matching without the need to reparametrize the boundary data.  Note that nothing is asserted about the parametrization away from the boundary data.  Differences between the equations derived here and those in \cite{CoHo2009} are attributed to different approaches being used.  Their equations live on $\mX\lp \mR^2\rp^*$, the one-form densities on $\mR^2$, with the equations on $T^*Q$ being given by a momentum map.  Our equations and variational principle are formulated directly on $TQ$.  Whilst the precise relation between the two sets of equations remains unclear, the method of adding an initial vector field on $S^1$ in order to enable matching parametrized boundary data is in agreement.  Indeed, comparing equations before and after adding in the extra parameter both here and in \cite{CoHo2009} shows that each set of equations are modified by adding precisely the same terms.

\subsubsection{Uniform parametric morphing}\label{sec:UPM}

The second case considers forcing $f$ that ensures that the curve $c$ is parametrized uniformly for all time.  We call this {\em uniformly parametrized morphing}. That is, that the equation
\begin{equation}
|c_\theta| = l(c) := \int^1_0\operatorname{vol}(c)
\label{eq:constParam}
\end{equation}
holds for all time and for all $\theta \in S^1$.  Notice that the right hand side of \eqref{eq:constParam} is independent of $\theta$.  \cite{MiMu2006} shows that such curves, which are equally well described by the relation $|c_\theta| = constant$, form a submanifold $\cU\subset \Emb(S^1,\mR^2)$.  Differentiating \eqref{eq:constParam} shows that a tangent vector $(c,U) \in T\cU$ if and only if
\begin{equation}
|c_\theta| \lp D_\theta \lp U\cdot \tau\rp - \lp U\cdot n\rp\kappa\rp = -\int^1_0 \lp U \cdot n\rp \kappa \, \operatorname{vol}(c).
\label{eq:constParam1}
\end{equation}
Indeed, if $\delta c=U$, the variation of the left hand side of \eqref{eq:constParam} is
\begin{eqnarray*}
\delta |c_\theta| &=&\frac{c_\theta\cdot\delta c_\theta}{|c_\theta|}=c_\theta\cdot D_\theta \delta c=c_\theta\cdot (\delta \tau-[\delta, D_\theta]c)=|c_\theta|\lp D_\theta \lp U\cdot \tau\rp - \lp U\cdot n\rp\kappa\rp,
\end{eqnarray*}
where we used that $c_\theta\cdot\delta\tau=0$. The variation of the right hand side is
\[
\delta \int^1_0\operatorname{vol}(c)=\int_0^1\lp D_\theta\lp U\cdot \tau \rp - \lp U \cdot n\rp\kappa\rp \, \operatorname{vol}(c)=-\int_0^1\lp U \cdot n\rp \kappa\, \operatorname{vol}(c)
\]
Applying \eqref{eq:constParam1} to $U=c_t$ reveals the following relation
\begin{equation}
|c_\theta| \lp D_\theta s - h\kappa\rp = -\int^1_0 h \kappa \, \operatorname{vol}(c).
\label{eq:constParam1a}
\end{equation}
Solving the relation \eqref{eq:constParam1a} for $s$ in Proposition \ref{prop:unvar} would yield the required vertical term to keep the evolution of $c$ uniformly parametrized.  There is still a problem to deal with, namely that \eqref{eq:constParam1} does not respect the action of $G$.  This corresponds to breaking the $G$-invariance of $f$ in the un-reduction equations, which is disallowed since $G$-invariance is critical to the derivation of the un-reduction equations.  

A solution to this issue may be found by noting that one only requires \eqref{eq:constParam1} {\em on the submanifold of uniformly parametrized curves}, $\cU$.  The intersection of each fibre of $Q \to \Sigma$ with $\cU$ are the orbits of rigid rotation of $S^1$.  Indeed, \eqref{eq:constParam1} is invariant under the action of $S^1$ on $Q$ given by $(\al,c(\theta)) \mapsto c(\theta + \al)$.  Therefore, one may define $f$ on $\cU$ such that \eqref{eq:constParam1} is satisfied, and then extend to the whole of $Q$ by enforcing $G$-invariance.

Having described the way in which a suitable $f$ may be found in principle, it now appears that there is a short cut that allows the explicit calculation of $f$ to be bypassed.  Note that on the submanifold $\cU$ one has $l/|c_\theta| = 1$.  Thus one may multiply this factor on the {\em left hand side only} of \eqref{eq:constParam1} to obtain a relation that is precisely equivalent to the original on $\cU$, but is also $G$-invariant.  This relation, applied to $U=c_t$, is
\begin{equation}
l \lp D_\theta s - h\kappa\rp = -\int^1_0 h \kappa \, \operatorname{vol}(c).
\label{eq:constParam2}
\end{equation}
Solving \eqref{eq:constParam2} for $s$ yields the same result as integrating the second un-reduction equation from Proposition \ref{prop:unvar} with $f$ constructed as described above.  Rearranging \eqref{eq:constParam2} yields
\begin{equation}
D_\theta s = h\kappa - \frac{1}{l}\int^1_0 h\kappa \,\operatorname{vol}(c).
\label{eq:constParam3}
\end{equation}
Equation \eqref{eq:constParam3} states that the rate of change of $s$ as one moves along the curve is equal to the deviation of the quantity $h\kappa$ from its average around the curve.  Since uniform parametrization is a non-local property of the curve, we expect that non-local terms appear in the equations, these terms take the form of the average of $h\kappa$.

Combining \eqref{eq:constParam3} with the un-reduction equations from Proposition \ref{prop:unvar} yields the following equations
\begin{eqnarray}\label{eq:lift1}
h_t &=& D_\theta \lp s h\rp - \frac{\kappa\lp 1+ 3\kappa^2\rp}{2\lp 1 + \kappa^2\rp}h^2 + \frac{1}{\lp 1+\kappa^2\rp}\lp D^2_\theta\lp \kappa h^2\rp - 2 \kappa h D^2_\theta h\rp\\
\nonumber\\
s(\theta) &=& s(0) + \int^\theta_0h\kappa \, \operatorname{vol}(c) - \frac{1}{l} \int^\theta_0 \, \operatorname{vol}(c)\int^1_0 h \kappa \, \operatorname{vol}(c),\label{eq:lift2}\\
\nonumber\\
c_t &=& hn + s\tau.\label{eq:lift3}
\end{eqnarray}

Equations \eqref{eq:lift1}---\eqref{eq:lift3} may be described as the lifting of the geodesic equations on $\Sigma$ under the metric induced by $\cG$ to the the submanifold, $\cU \subset \Emb\lp S^1, \mR^2\rp$, of uniformly parametrized embeddings.  As far as we are aware, \eqref{eq:lift1}---\eqref{eq:lift3} have not appeared in the literature.  The main interest in these un-reduction equations is that they solve the major issue concerning the horizontal shooting method, \eqref{eq:hshoot1}---\eqref{eq:hshoot2}.  That is, the parametrization of solutions of \eqref{eq:lift1}---\eqref{eq:lift3} behave regularly for all time.  This property solves the problems concerning clustering of data points and bad parametrizations implicitly along the entire trajectory, thereby solving the parametrization issues described in \S\ref{sec:Intro}.

\subsubsection{Section morphing}

During a recent visit to London, M. Bauer and P. Harms asked whether un-reduction could be used to achieve geodesic morphing of shapes whilst respecting the property of being described as a graph.  We worked with Bauer and Harms to show that this objective is indeed possible.  The argument is presented here for the case of simple, closed planar curves.  We would like to thank Bauer and Harms for their insight and collaboration on this point, and hope that the experience is helpful for their studies of higher dimensional matching problems, see some examples of which may be found in \cite{BaHaMi2010}.

To capture the setup geometrically, we begin with slight generalizations.  Consider $Q = \Emb\lp \cM, \cN\rp$, which is a principal bundle over $\Sigma = Q/G$ where $G = \Diff\lp \cM\rp$, as in \cite{KrMi1997}.  Suppose that $\cN = \cM \times \cS$ is a trivial fibre bundle over $\cM$ with fibre $\cS$ and projection $\pi_\cN$ given by projection on the first factor.  A {\em graph} is then a section $\eta \in \Gamma(\pi_\cN)$.  We say a curve $c(t) \subset Q$ respects the graph property if and only if $c(t) \subset \Gamma(\pi_\cN) \cap Q$.  That is, $\pi_\cN\circ c(t) = \id_\cM$.

Interesting choices could involve $\lp \cM,\cN\rp = \lp S^2, \mR^3/\{0\} \equiv S^2 \times \mR^+\rp$, which may be used to match spheres embedded in $\mR^3$ under small deformations.  The example $\lp \cM, \cN\rp = \lp \mR^2, \{(x,y,z) \in \mR^3| z>0\} \equiv \mR^2 \times \mR^+\rp$ could be used to compare different topographies.

Here, we shall specialize to the case of closed, simple planar curves.  This corresponds to the particular choice $\lp \cM, \cN\rp = \lp S^1, \mR^2/\{0\}\equiv S^1 \times \mR^+\rp$.  We shall identify $S^1$ with the unit circle in $\mR^2$, and fix
\[
\pi_\cN:\cN = \mR^2/\{0\}\rightarrow \cM=S^1, \quad \pi_\cN({\bf x}) = {\bf x}/|{\bf x}|.
\]
Section morphing is not available for all curves in the plane, but rather is constrained to curves $c \in \Gamma(\pi_{\cN}) \cap Q$ that contain the origin.  Considering curves up to translations in the plane, the choice of origin becomes arbitrary, and we shall therefore ignore the origin condition.

The approach to deriving the un-reduction equations is similar to that taken for uniform parametrisation morphing in \S\ref{sec:UPM}.  That is, we derive a relation on the submanifold $\cU := \Gamma(\pi_{\mathcal{N}})\cap Q \subset Q$.  The submanifold $\cU$ is not $G$-invariant, in fact, it is only preserved by the identity.  Despite this, in contrast with \S\ref{sec:UPM}, our relation turns out to be $G$-invariant without the need for modification. Therefore, the implied forcing may be extended to the $G$-invariant submanifold $\cU\cdot G \subset Q$.  This observation again allows us to identify our equations as un-reduction equations without explicitly deriving the vertical forcing.
   

For the graph condition to remain invariant over time, the map $\pi_\cN\circ c(t)$ must be time-invariant. This leads to the following calculation
\begin{equation}\label{eq:GraphPres}
0 = \prt_t \lp \frac{c}{|c|} \rp= \frac{1}{|c|}\lp c_t - c \frac{\lp c\cdot c_t\rp}{|c|^2} \rp.
\end{equation}
From this relation we deduce that
\[
s=c_t\cdot\tau=(c\cdot\tau)\frac{c\cdot c_t}{|c|^2}\quad\text{and}\quad h=c_t\cdot n=(c\cdot n)\frac{c\cdot c_t}{|c|^2}
\]
which results in the relation
\begin{equation}
s = h\frac{(c\cdot \tau)}{(c \cdot n)}.
\label{eq:GraphTang}
\end{equation}

Note that equation \eqref{eq:GraphTang} is $G$-invariant, therefore it may be extended along the whole of $\cU \cdot G$.  Differentiating \eqref{eq:GraphTang} and substituting in equations from Proposition \ref{prop:unvar} would again result in a complicated expression for the exogenous parameter, $f$, that would be the graph preservation forcing term.  Given that \eqref{eq:GraphTang}, we omit the calculation.  Collecting the un-reduction equations from Proposition \ref{prop:unvar} together with \eqref{eq:GraphTang} yields the section morphing un-reduction equations
\begin{eqnarray}
h_t &=& D_\theta \lp h^2\frac{\lp c\cdot \tau\rp}{\lp c\cdot n\rp}\rp - \frac{\kappa\lp 1+ 3\kappa^2\rp}{2\lp 1 + \kappa^2\rp}h^2 + \frac{1}{\lp 1+\kappa^2\rp}\lp D^2_\theta\lp \kappa h^2\rp - 2 \kappa h D^2_\theta h\rp\label{eq:GraphUnred1}\\
\nonumber\\
c_t &=& \frac{h}{\lp c\cdot n\rp}c.
\label{eq:GraphUnred2}
\end{eqnarray}

Introducing polar coordinates in the plane, $c(t,\theta) = r(t,\theta)(\cos \theta, \sin \theta)$, equations \eqref{eq:GraphUnred1} - \eqref{eq:GraphUnred2} become
\begin{eqnarray}
h_t &=& D_\theta \lp h^2\frac{r_\theta}{r}\rp - \frac{\kappa\lp 1+ 3\kappa^2\rp}{2\lp 1 + \kappa^2\rp}h^2 + \frac{1}{\lp 1+\kappa^2\rp}\lp D^2_\theta\lp \kappa h^2\rp - 2 \kappa h D^2_\theta h\rp\label{eq:RGraphUnred1}\\
\nonumber\\
r_t &=& h\sqrt{\lp \frac{r_\theta}{r}\rp^2 + 1}.
\label{eq:RGraphUnred2}
\end{eqnarray}

Note that equations \eqref{eq:RGraphUnred1} - \eqref{eq:RGraphUnred2} describe a set of differential equations on functions taking values in $\mR$, whereas, the full un-reduction equations from Proposition \ref{prop:unvar} constitute a set of differential equations on functions taking values in $\mR^2$.  Therefore the particular choice of morphing that respects the section relation \eqref{eq:GraphPres} has allowed us to integrate the un-reduction equations twice.  In higher dimensional problems, this method of reduction which we call {\em section morphing} could prove invaluable for simplifying numerical implementation.  Note that, in this example, clustering of data points is prevented by the graph preservation property.  The only parametrization effect that may cause a problem is the distance between data points becomes large as the radius $\max_\theta r(\theta,t)$ increases.  This issue may be averted by first scaling the shapes to be matched, and then using section morphing.  A detailed discussion of such developments is beyond the scope of this paper, and is left to future work.

\section{Acknowledgements} 
We thank C. J. Cotter for encouraging and helpful discussions of this work as it was being developed. 
The work of FGB was partially supported by a Swiss NSF postdoctoral fellowship. 
The work of DDH was partially supported by the Royal Society of London Wolfson scheme and the European Research Council Advanced scheme.


\begin{thebibliography}{99}

\bibitem{BaHaMi2010}
		\newblock M. Bauer and P. Harms and P.W. Michor,
		\newblock \emph{Almost local metrics on shape space of hypersurfaces in n-space},
		\newblock  preprint, \arXiv{1001.0717v2}.

\bibitem{ClCoPe2010}
		\newblock A. Clark and C. Cotter and J. Peiro,
		\newblock \emph{A reparameterisation-based approach to geodesic shooting for 2D curve matching}
		\newblock work in progress.

\bibitem{CeHoMaRa1998}
    \newblock H. Cendra and D.D. Holm and J.E. Marsden and T.S. Ratiu,
    \newblock  \emph{Lagrange reduction, the {E}uler-{P}oincar\'e equations and
								semidirect products},
    \newblock  in ``Geometry of differential equations" (eds. A. Khovanskii and A. Varchenko and V. Vassiliev),
                Amer. Math. Soc. Trans., (1998), 1--26.

\bibitem{CeMaRa2001}
     \newblock H. Cendra and J.E. Marsden and T.S. Ratiu,
     \newblock \emph{Lagrangian reduction by stages},
     \newblock Mem. Amer. Math. Soc., \textbf{152} (1994), 1--132.
     
\bibitem{Co2008}
     \newblock C.J. Cotter,
     \newblock \emph{The variational particle-mesh method for matching curves},
     \newblock J. Phys. A: Math. Theor., \textbf{41} (2008).
     
\bibitem{CoHo2009}
		\newblock C.J. Cotter and D.D. Holm,
		\newblock \emph{Geodesic boundary value problems with symmetry},
		     \newblock J. Geom. Mech.  {\bf 2} (1) 417-444 (2010).
		\newblock  Preprint at \arXiv{0911.2205}.

\bibitem{ElGBHoPuRa2009}
     \newblock D.C.P. Ellis and F. Gay-Balmaz and D.D. Holm and	V. Putkaradze and T.S. Ratiu,
     \newblock \emph{Symmetry reduced dynamics of charged molecular strands},
     \newblock Arch. Rational Mech. Anal., \textbf{197} (2010), 811-902

\bibitem{ElGBHoRa2009}
		\newblock D.C.P. Ellis and F. Gay-Balmaz and D.D. Holm and T.S. Ratiu,
		\newblock \emph{Lagrange-Poincar\'e field equations},
		\newblock  To appear in J. Geom. Mech.
		\newblock  Preprint at \arXiv{0910.0874}.

\bibitem{GBHoMeRaVi2010}
		\newblock F. Gay-Balmaz, D. D. Holm, D. M. Meier, T. S. Ratiu, F.-X. Vialard,
		\newblock \emph{Invariant higher-order variational problems}
		\newblock  To appear in Comm. Math. Phys.
		\newblock  Preprint at \arXiv{1012.5060}. 

\bibitem{HoMaRa1998}
     \newblock D.D. Holm and J.E. Marsden and T.S. Ratiu,
     \newblock \emph{The Euler-Poincar\'e equations and semidirect products with applications
	to continuum theories},
     \newblock Adv. Math., \textbf{137} (1998), 1-–81
     
\bibitem{KrMi1997}
     \newblock A. Kriegl and P.W. Michor,
     \newblock ``The convenient setting of global analysis,"
     \newblock AMS, 1997.
     
\bibitem{Ma1992}
     \newblock J.E. Marsden,
     \newblock ``Lectures on Mechanics,"
     \newblock Cambridge University Press, 1992.
     
\bibitem{MaRa2002}
     \newblock J.E. Marsden and T.S. Ratiu,
     \newblock ``Introduction to Mechanics and Symmetry: A Basic Exposition of Classical Mechanical Systems (Texts in Applied Mathematics),"
     \newblock Springer, 2002.
     
\bibitem{Mi1980}
     \newblock P.W. Michor,
     \newblock \emph{Manifolds of smooth maps III: The principal bundle of embeddings	of non-compact smooth manifolds},
     \newblock Cah. Top. G\'eom. Diff. Cat., \textbf{21} (1980), 325--337

\bibitem{MiMu07}
     \newblock P.W. Michor and D. Mumford,
     \newblock \emph{An overview of the Riemannian metrics on spaces of curves using the Hamiltonian approach},
     \newblock Appl. Comput. Harmon. Anal., \textbf{23} (2007), 74--113

\bibitem{MiMu2006}
     \newblock P.W. Michor and D. Mumford,
     \newblock \emph{Riemannian geometries on spaces of plane curves},
     \newblock J. Eur. Math. Soc., \textbf{8} (2006), 1--48   
     
\bibitem{Mo1993}
     \newblock R. Montgomery,
     \newblock \emph{Gauge theory of the falling cat},
     \newblock Fields Inst. Comm., \textbf{1} (1993), 193--218

\bibitem{Vi2009}
    \newblock F.-X. Vialard,
    \newblock  ``Hamiltonian Approach to Shape Spaces in a Diffeomorphic Framework: From the Discontinuous Image Matching Problem to a Stochastic Growth Model,"
    \newblock  Ph.D. Thesis, \'Ecole Normale Sup\'erieure de Cachan, 2009.
    

\bibitem{Yo2010}
\newblock L. Younes,  
\newblock {\it Shapes and Diffeomorphisms}, 
\newblock Springer Applied Mathematical Sciences, Berlin, 2010.

\bibitem{YoArMi2009}
\newblock L. Younes, F. Arrate, and M.~I. Miller,
\newblock \emph{Evolutions equations in computational anatomy},
\newblock NeuroImage, {\bf 45} (2009), S40--S50.
     
\end{thebibliography}
\end{document}